\newtheorem{thm}{{Theorem}}
\newtheorem{lem}[thm]{{Lemma}}
\newtheorem{asp}{{Assumption}}
\newtheorem{corollary}[thm]{{Corollary}}
\newtheorem{example}[thm]{{Example}}
\newcommand{\argmin}{\mbox{argmin}}
\newcommand{\var}{\mbox{Var}}
\newcommand{\cov}{\mbox{Cov}}
\newcommand{\kp}{k_{1}(p)}
\newcommand{\kpp}{k_{3}(p)}
\newcommand{\cp}{k_{2}(p)}
\newcommand{\kppp}{\cp \kp^4\kpp^2}
\newcommand{\gp}{g(p,d,n)}
\newcommand{\Sigmaeps}{\Sigma_\eps}
\def\eps{\varepsilon}
\newcommand{\R}{\mathds{R}}
\newcommand{\N}{\mathds{N}}
\newcommand{\Z}{\mathds{Z}}
\newcommand{\ind}{\mathds{1}}
\newcommand{\pt}{{\tilde {p}}}
\newcommand{\Gammah}{\hat{ {\Gamma}}^{(st)}(0)}
\newcommand{\Gammahast}{\hat{ {\Gamma}}^{*(st)}(0)}
\newcommand{\U}{\mathds{U}}
\newcommand{\A}{\mathds{A}}
\newcommand{\G}{{n_g}}
\newcommand{\Gn}{{G_n}}
\newcommand{\Gammas}{{\Gamma}^{(st)}(0)}
\newcommand{\Gammasast}{{\Gamma}^{*(st)}(0)}
\newcommand{\Sigmah}{\hat \Sigma_\eps}
\newcommand{\Sigmahthr}{\hat \Sigma_{\hat \eps}}
\begin{document}

\begin{frontmatter}

\title{Bootstrap Based Inference for Sparse High-Dimensional Time Series Models}
\runtitle{Bootstrap Based Inference for Sparse High-Dimensional Time Series Models}

\begin{aug}
  \author{\fnms{Jonas}  \snm{Krampe}\thanksref{a}{j.krampe@uni-mannheim.de}},
  \author{\fnms{Jens-Peter} \snm{Kreiss}\thanksref{b}{j.kreiss@tu-bs.de}}
  \and
  \author{\fnms{Efstathios }  \snm{Paparoditis}\thanksref{c}{stathisp@ucy.ac.cy}}

  \runauthor{Krampe et al.}

  \affiliation{University of Mannheim,Technische Universit\"at Braunschweig and University of Cyprus}

  \address[a]{University of Mannheim}

  \address[b]{Technische Universit\"at Braunschweig}
  
  \address[c]{University of Cyprus}

\end{aug}

\begin{abstract}\
Fitting sparse models to high-dimensional time series is an important area of statistical inference.
In this paper we consider sparse vector autoregressive models and develop appropriate bootstrap methods to infer properties of such processes.
Our bootstrap methodology generates pseudo time series using a model-based bootstrap procedure which involves an estimated, sparsified version of the underlying vector autoregressive model.
Inference is performed using so-called de-sparsified or de-biased estimators of the autoregressive model parameters. We derive the asymptotic distribution of such estimators in the time series context and establish asymptotic validity of the bootstrap procedure proposed for estimation and, appropriately modified, for testing purposes. In particular we focus on testing that  large
groups of autoregressive coefficients equal zero. Our theoretical results are complemented by simulations which investigate the finite sample performance of the bootstrap methodology proposed. A real-life data application is also presented.
\end{abstract}

\begin{keyword}
\kwd{De-sparsified estimators}
\kwd{Testing}
\kwd{Vector autoregressive models}
\end{keyword}
\end{frontmatter}

\def\spacingset#1{\renewcommand{\baselinestretch}%
{#1}\small\normalsize} \spacingset{1}

\section{Introduction}

Statistical analysis of high-dimensional time series has attracted considerable interest during the last decades. Initiated by developments in the i.i.d., mainly   regression, set-up,  statistical methods have been proposed to select  and to estimate non-zero  parameters in the context of sparse  high-dimensional time series models by means of regularized-type  estimators. To be more specific,  consider  
a $p$ dimensional stochastic process $ \{X_t, t\in \Z\}$,  where the random vector  $X_t$ is generated via a $d$th order     vector autoregressive (VAR($d$)) model,  
\begin{equation} \label{eq.var}
X_t=\sum_{s=1}^d A^{(s)} X_{t-s}+ \eps_t. 
\end{equation}
Here $A^{(s)}$, $s=1,2, \ldots, d$,  are $p\times p$ coefficient matrices while the  $\eps_t$'s are  assumed to be independent and identically distributed (i.i.d.) innovations with $ E(\eps_t)=0, \var(\eps_t)=\Sigma_\eps$, in short,  $\eps_t\sim (0,\Sigma_\eps)$. Assume that the process is stationary and causal, that is  $\det(\mathcal{A}(z))\not = 0$ for all $z\leq 1$, where 
 $\mathcal{A}(z)=I-\sum_{s=1}^d A^{(s)} z^s$. Model (\ref{eq.var})
  has $dp^2$ unknown  parameters  in the matrices $A^{(s)}, s =1,\dots,d$ and  $p(p+1)/2$  unknown parameters in the innovation covariance matrix  $\Sigma_\eps$. Hence the total number of unknown parameters is  $q=p^2(d+1/2)+p/2=O(dp^2)$.  Suppose that a time series $ X_1, X_2, \ldots, X_n$ stemming from $\{X_t, t\in\Z\}$ has been observed. If the number of parameters is small in the sense that $q \ll n$, 
  then inference for such processes is a well developed and well understood area in multivariate  time series analysis; see among others,  \cite{reinsel2003elements,luetkepohl2007new,tsay2013multivariate} and \cite{kilian2017structural}.
  
  In this paper we consider the important case  where  $q \gg n$ but   the VAR(d)  model  (\ref{eq.var}) 
  possesses some form of sparse representation, that is  many of the parameter coefficients  are equal to zero.  To elaborate, we first  fix some notation. For  a random variable $ X$ we write   $ \|X\|_{E,q}$ for $\big(E|X|^q\big)^{1/q}$, where   $q\in \N$;  for  a vector $x\in \R^p$,  $\|x\|_0 = \sum_{j=1}^p \ind(x_j \not = 0)$, $ \| x \|_1 = \sum_{j=1}^p |x_j|$ and $\| x \|_2^2 = \sum_{j=1}^p |x_j|^2$.  Furthermore, for a $r\times s$ matrix $B=(b_{i,j})_{i=1,\ldots,r, j=1,\ldots,s}$,  $\|B\|_1=\max_{1\leq j\leq s}\sum_{i=1}^r|b_{i,j}|=\max_j \| B e_j\|_1$, 
   $\|B\|_\infty=\max_{1\leq i\leq r}\sum_{j=1}^s|b_{i,j}|=\max_{i} \| e_i^\top B\|_1$, where $e_j=(0,\ldots,0,1,0,\ldots, 0)^\top$ denotes  the  vector with the one appearing in the $j$th position. Denote the largest eigenvalue of a matrix $A$ by $\rho(A)$ and $\|A\|_2^2=\rho(AA^\top)$.
  Using this notation,  let $k_j^r(p)=\sum_{s=1}^d\|e_j^\top A^{(s)}\|_0$ and $k_j^c(p)=\sum_{s=1}^d\| A^{(s)}e_j\|_0$ be the number of non-zero coefficients in the $j$th row, respectively,  in the $j$th column of the matrices $ A^{(s)}$, $s=1,2, \ldots, d$, and, let  $\kp=\max_{1\leq j \leq p} \{k_j^r(p),k_j^c(p)\}$. 
  In the following we 
  consider the case  where  the VAR(d) model is sparse, that is the total number of non-zero coefficients within a row or column satisfies $\kp \ll n$. Furthermore, we allow $ \kp$ to depend on $p$, that is, $\kp$  can be an increasing function of the dimension $p$ of the process under consideration. 
  
 In the setting described above,  procedures to fit  sparse VAR models have been considered by many authors in the literature  by means of regularized type estimators; see among others, 
\cite{song2011large} for $\ell_1$-penalized least squares (lasso) estimators, \cite{han2015direct} for $\ell_1$-penalized Yule-Walker estimators,   \cite{kock2015oracle} for oracle type inequalities for  adaptive lasso estimators and \cite{davis2016sparse} for a two step procedure which includes $\ell_1$-penalized  likelihood estimators. 
 Consistency of  $\ell_1$-penalized  estimators  has been established for a specific sparsity setting by  \cite{basu2015},  while   \cite{lin2017regularized} considered  estimation  for multi-block high-dimensional  VAR models including  testing of Granger-causality.  However, and despite the progress made in fitting sparse VAR models,  statistical inference  for such models seems to be a   less developed area. 
This is probably due to the fact that the asymptotic distribution  of $\ell_1$-penalized  estimators is  difficult to derive and   
statistical inference is  much more involved and difficult to implement.  Notice that even in the  i.i.d. regression set-up  with fixed dimension, the limiting distribution of regularized lasso estimators has been shown to 
 be nonstandard and  one which  assigns  positive probability mass at zero to the zero coefficients; see \cite{knight2000asymptotics}. This 
 leads,  among other things,  
 to  inconsistency of standard,  model-based bootstrap methods;  \cite{chatterjee2010asymptotic}; see also   
\cite{chatterjee2011bootstrapping} for a different  consistent  bootstrap proposal in the high-dimensional  i.i.d. regression setting. 

In this paper we focus on the development of  bootstrap procedures for inferring  properties of  high-dimensional, sparse  VAR($d$) processes. Toward this  goal and in order to avoid potential   problems associated with nonstandard limiting behaviour of regularized lasso  estimators, we  propose to  bootstrap  de-biased or de-sparsified  estimators of the VAR parameters. De-sparsified estimators of sparse estimators obtained  by lasso regularization, have been introduced and investigated in the i.i.d. regression case  by  several  authors. We refer here to the initial paper by  \cite{zhang2014} and to   \cite{deGeer2014},  which investigated such estimators in a much more broader   setting  and established, under certain  regularity conditions,  asymptotic optimality in the sense of semi-parametric efficiency.  
In the same i.i.d. regression set-up, de-sparsified estimators  have also been used in the context of  model based bootstrap inference; 
see  \cite{dezeure2017} for a  discussion and a recent contribution. De-biased estimators in the context of Gaussian respectively sub-Gaussian VAR processes have also been considered in \cite{chaudhry2017uncertainty} and have been used for statistical inference and in particular for testing Granger causality. Furthermore, de-sparsified estimators (or versions of it) were used to do inference for systems of high-dimensional regression equations which included VAR systems.  \cite{chernozhukov2018lasso} show a near oracle inequality for the lasso estimator under a rather general set-up which goes beyond sub-Gaussian innovations. They also do inference by 
using the bootstrap but  they focus on simultaneous equation systems; see in particular  Example 2 in \cite{chernozhukov2018lasso}. Thus the inference set-up consider by these authors  does not include VAR processes. \cite{neykov2018unified} discusses a de-sparsifying approach for Gaussian systems. They construct an influence function by projecting the fitted estimating equations to  sparse directions. The resulting  de-sparsifying approach is more general in the sense that it can be also applied to nonlinear models.
For linear models, like VAR models,  estimation of these sparse directions correspond to the CLIME estimator, see \cite{cai2011constrained}. Hence, besides  Gaussianity, an important assumption of this approach is that the inverse lag-zero autocovariance matrix of the (stacked) VAR system is sparse. This sparsity is also assumed in \cite{zheng2018testing} who adapt the de-sparsifying approach of \cite{ning2017general} to VAR processes. The  innovations are assumed to be componentwise independent, that is,  their covariance matrix   is diagonal. As  we will see  in Section~\ref{subsec2.2},  a sparse VAR system does not necessarily imply that the inverse lag-zero autocovariance matrix is sparse. Hence, such an  assumption might be more restrictive in the  time series context  than it is in the i.i.d. regression set up. We propose here a de-sparsifying approach which makes use of the underlying VAR structure and does not required  sparsity of the inverse of the  lag-zero autocovariance matrix of the process.

 Our work extends and generalizes the aforementioned contributions in many directions. In particular, we  consider de-sparsified, respectively de-biased, estimators for the general VAR($d$) process  and we derive their limiting distribution under quite general  conditions on the process and on the stochastic properties of the, not necessarily Gaussian, innovations. We impose sparsity assumptions only  on the VAR coefficient matrices $A^{(s)}$ and on the innovation covariance matrix  (or on its inverse), but we {\it do not} impose sparsity assumptions directly  on the autocovariance structure of the VAR($d$) process.  Furthermore, we introduce a novel and valid bootstrap procedure for inferring properties of the parameters of the VAR($d$) process. Appropriately modified, this bootstrap procedure also allows,  for  testing statistical hypotheses about groups of model parameters  in  a very flexible way, where the  total number of hypotheses tested is also allowed to increase to infinity  with the sample size $n$.

We first derive  the limiting  distribution of    de-sparsified estimators for the parameters of a general, stationary  VAR($d$) process. We show that this limiting distribution is a regular Gaussian distribution which is   solely affected by the autocovariance structure of the underlying VAR($d$) process.  We then propose a bootstrap procedure to estimate the distribution of the  de-sparsified estimators of the VAR parameters. This procedure uses  an appropriately thresholded $\ell_1$-penalized  estimator of the coefficient matrices $A^{(s)}$ and a thresholded, sparse  estimator of the covariance matrix $\Sigmaeps$ of the innovations. Thresholding  is important in this context, since it guaranties  sparsity of the  VAR model used in the bootstrap world.  The  fitted sparse VAR model  driven  by appropriately generated 
i.i.d. innovations is then used to generate    vector pseudo time series $X_1^\ast, X_2^\ast, \ldots, X_n^\ast$, which appropriately  imitate the sparse stochastic structure of the observed time series $X_1, X_2, \ldots, X_n$.     We prove consistency of such a bootstrap procedure under general conditions when   applied  to estimate  the distribution of  de-sparsified estimators.  The results obtained allow for using  the bootstrap procedure proposed in order to construct  individual or simultaneous   confidence intervals and to perform tests of  hypotheses about  model parameters. In particular,  
 we show how the bootstrap method proposed can be used in order to test the interesting  hypothesis  that  individual or, more importantly,  groups of  coefficients  of the VAR model are equal to zero. 
  For such testing purposes,  the   bootstrap procedure is appropriately modified so that  the sparse VAR model used to generated the pseudo time series $X_1^\ast, X_2^\ast, \ldots, X_n^\ast$ satisfies the null hypothesis of interest. Consistency of the bootstrap based testing procedure  is established for  max-type  test statistics. Finally,  we demonstrate by means of numerical investigations that the theoretical results established are accompanied by a good finite sample behavior of the bootstrap methodology developed.

The  paper is organized as follows.  Section 2 
 introduces de-sparsified estimators for the VAR model parameters  and derives their limiting  distribution under suitable assumptions on the sparsity of the underlying VAR process and on the consistency properties of the regularized estimators involved. Section 3 introduces the bootstrap procedure proposed and establishes its asymptotic validity for estimating the distribution of de-sparsified  estimators
 and,  appropriately modified, for  testing  hypotheses  about  model parameters.  Asymptotic validity of the bootstrap based test procedure is established. Section 4 is devoted to  issues related to the practical implementation and to  numerical investigations. We propose a bias correction  procedure to improve the finite sample performance of the bootstrap based testing  and  we  present several simulations    supporting the  good size and power behavior of the bootstrap methodology proposed  for difficult inference problems.  An application to an interesting real-life data set is also discussed. Auxiliary results  and technical  proofs are deferred to Section \ref{sec.proofs}.

\section{De-Sparsified Estimators of VAR Parameters}
\subsection{De-sparsified estimators}  \label{subsec2.2}
In this section we present our approach to construct 
de-biased estimators in the high-dimensional time series context. Towards this  we  adapt to the time series set-up  the basic idea  for  the introduction of de-sparsified estimators used in the  i.i.d. regression context, (see \cite{dezeure2017}) and make the appropriate modifications. We first  fix  some additional notation. Let $I_{p,-j}$ be the $(p-1)\times p $ matrix obtained from  the  identity matrix of dimension $p$ after deleting its  $j$-th row. For a matrix $A \in \R^{q \times p}$,  $A_{j;r}:=e_j^\top A e_r$ is its $(j,r)$ element and  $A_{j;-r}:= e_j^\top A I_{p;-r}^\top$ its  $j$-th  row  after deleting the element $A_{j;r}$.

The VAR$(d)$ process given in \eqref{eq.var} can be written as $$X_t=(A^{(1)},\dots,A^{(d)}) (X_{t-1}^\top,\dots,X_{t-d}^\top)^\top+\eps_t=: \Xi W_{t-1} +\eps_t,$$
with an obvious notation for the matrices $\Xi\in \R^{p\times dp}$ and $ W_{t-1}\in \R^{dp\times 1}$. 

Observe first    that if $ p \ll n$, then  the standard least squares estimator of $\Xi_{j;r}$ obtained by regressing $X_{t;j}$ onto $W_{t-1;r}, r=1,\dots,dp$,  also can be written as 
\begin{align} \label{eq.standard-lse}
\widehat{\Xi}_{j;r} & =   \sum_{t=d+1}^n V_{t-1;r} X_{t;j}\big/ \sum_{t=d+1}^n V_{t-1;r}W_{t-1;r},
\end{align}
where $V_{t-1;r}$ are the ``residuals''  obtained as the difference between  $ X_{t-1;r}$ and its best (in mean square) linear approximation  using 
all other variables 
 contained in  the lagged vector $ X_{t-1}$. 
$ V_{t-1;r} $  also can   be expressed  as $ V_{t-1;r}= \tilde{\beta}^\top_r  W_{t-1}$,  where 
$ \tilde \beta_r= \big(e_r^\top \widetilde\Gamma^{(st)}(0)^{-1} e_r\big)^{-1}\widetilde{\Gamma}^{(st)}(0)^{-1}e_r$, with $ \widetilde{\Gamma}^{(st)}(0)$  
the sample covariance matrix of $\{W_t,t=d+1,\dots,n\}$ 
and which is 
given by $\tilde \Gamma(0)^{(st)}=(n-d)^{-1} \sum_{t=d+1}^n (W_t-\bar W_n) (W_t-\bar W_n)^\top$,  $\bar W_n=(n-d)^{-1} \sum_{t=d+1}^n W_t$; see also  Lemma~\ref{lem.dagger}.

Clearly,  in the high-dimensional set-up, that is if $ p \gg n$, such a construction  
is not  possible since in this case $   V_{t-1;r}\equiv 0$, due to the fact that the row  vector  $\{W_{t-1;r}, t=d+1,d+2, \ldots, n\}$ is an element of the subspace spanned 
by $\{W_{t-1;-r}, t=d+1,d+2,\dots,n\}.$  
To overcome this problem in the i.i.d. regression set-up, the approach followed     is to replace the estimator $ \widetilde{\beta}_r$ appearing  in the definition of the residuals $ V_{t-1;r}$, by some regularized estimator. This approach requires the imposition of  sparsity assumptions on the corresponding vector of coefficients $\beta_r=(e_r (\Gammas^{-1} e_r)^{-1} (\Gammas)^{-1} e_r$, that is on the inverse of the  covariance matrix $ \Gamma^{(st)}(0)^{-1}=E(W_t W_t^\top)$. However, in the  sparse VAR time series
set-up considered in this paper, such an approach seems not to be appropriate. The reason for this lies in the fact that due to  temporal  dependence,  sparsity of the matrices $ A^{(1)}, \ldots, A^{(d)}$  and $\Sigma_\varepsilon$ respectively  $\Sigma^{-1}_\varepsilon$,  does
necessarily imply  sparsity of the  autocovariance matrices $ \Gamma^{(st)}(0)$
respectively $ \Gamma^{(st)}(0)^{-1}$. In fact, even in the simple  VAR(1) case, the corresponding model parameter matrices $A=A^{(1)}$ and $ \Sigma_\varepsilon$ may be sparse but $ \Gamma^{(st)}(0)$ respectively $ \Gamma^{(st)}(0)^{-1}$ may not. For the same case  observe that if $ A$ is symmetric then by properties of the Neumann-series it can be shown  that for the inverse of the covariance matrix $\Gamma(0)=E(X_tX^\top_t)$ it yields that   $ \Gamma(0)^{-1}=\Sigma_\varepsilon^{-1} -\Sigma_\varepsilon^{-1}A\Sigma_\varepsilon A\Sigma_\varepsilon^{-1}$, which implies certain restrictions on the sparsity conditions one has to impose on  $A$ and on $ \Sigma_\varepsilon$, respectively  $ \Sigma_\varepsilon^{-1}$, in order to achieve the  desired  sparsity properties of $ \Gamma(0)^{-1}$.   Thus imposing, in the sparse VAR time series set-up, sparsity assumptions directly on the autocovariance matrices respectively on  its inverse, is difficult to justify  and may implicitly restrict  the class of VAR(d) processes considered. For these reasons,   the  sparsity behavior of the VAR(d) process is handled  in this paper by    imposing   sparsity assumptions directly  on the process parameter matrices $ A^{(1)}, \ldots, A^{(d)}$  and on $\Sigma_\varepsilon$,  respectively,  on  $\Sigma_\varepsilon^{-1}$. Because of this,  and in order  to construct de-sparsified estimators, an alternative approach to the one used in the i.i.d. regression set-up has to be   followed.

To elaborate, recall that 
$\tilde\beta_r$ is an estimator of $\beta_r=(e_r (\Gammas^{-1} e_r)^{-1} (\Gammas)^{-1} e_r$
and that if $p \ll n$, then   least squares leads  to the estimation of $\Gammas$   by the sample covariance  $\tilde \Gamma^{(st)}(0)$.
Clearly $\tilde \Gamma^{(st)}(0)$ is not a consistent estimation of $\Gamma^{(st)}(0)$ (or of its inverse) if  $p > n$.  Thus  a  suitable estimator of  $\Gamma^{(st)}(0)$ has to be  used  which  satisfies certain consistency properties in the high-dimensional sparse time series setting considered. We will discuss later on the construction of such an estimator  which we will denote by $\Gammah$. 
Given such an estimator,   we can then construct our  ``residuals'' as $\hat Z_{t-1;r} = \hat \beta_r W_{t-1},$ where now $ \hat{\beta}_r$ is defined by  $\hat{\beta}_r =(e_r^\top \Gammah^{-1} e_r)^{-1} \Gammah^{-1} e_r.$  We  then have
\begin{align*}
\tilde \Xi_{j;r} =& \sum_{t=d+1}^n \! \hat Z_{t-1;r} X_{t;j}\big/ \sum_{t=d+1}^n \hat  Z_{t-1;r}W_{t-1;r}
=\Xi_{j;r}+\sum_{t=d+1}^n \hat  Z_{t-1;r} \eps_{t;j}\big/ \sum_{t=d+1}^n \hat  Z_{t-1;r}W_{t-1;r}  \\& +
\sum_{t=d+1}^n \hat  Z_{t-1;r} W_{t-1;-r} \Xi^\top e_j \big/ \sum_{t=d+1}^n \hat  Z_{t-1;r}W_{t-1;r}.    
\end{align*}
Since in the case $p>n$ considered, the ``residuals'' $\{Z_{t-1;r}\}$ are not perfectly orthogonal to $\{W_{t-1;-r}\}$, the last term above is non-zero. Consequently a bias  in the expression for  $\widetilde{\Xi}_{j;r} $ appears which, however,  can be  estimated using some (regularized) estimator of $\Xi=(A^{(1)},\dots,A^{(d)})$ which     satisfies certain consistency properties, to be discussed later on. Given such an estimator, which we denote by  $(\hat A^{(1,re)},\dots,\hat A^{(d,re)})=\hat\Xi^{(re)}$, we can  then estimate the bias term in  the last  displayed expression. Subtracting  the estimated  bias from $\tilde \Xi_{j;r}$  we  obtain the following de-biased estimator of $ A_{j;r}$ 
\begin{align}
\hat A_{j;r}^{(s,de)}  =\hat \Xi_{j;(s-1)d+r}^{(de)} =&\hat A_{j;r}^{(s,re)}+(\sum_{t=d+1}^n \hat Z_{t-1;(s-1)d+r} W_{t-1;((s-1)d+r)})^{-1} \nonumber \\ &  \  \times( \sum_{t=d+1}^n \hat Z_{t-1;(s-1)d+r} (X_{t;j}-e_j^\top \hat \Xi^{(re)} W_{t-1})), \label{de_est}
\end{align}
where  $r,j=1,\dots,p,s=1,\dots,d$. The estimator $\hat A_{j;r}^{(s,de)}$ given above is called a de-biased, respectively,   a de-sparsified estimator of $A_{j;r}^{(s)}$ since it is a bias corrected version  of the initial estimator $\tilde \Xi_{j;(s-1)d+r}=\tilde{A}_{j;r}^{(s)}$ and it is not sparse anymore. 

We now discuss   the estimators $\hat\Xi^{(re)} $ and  $\Gammah$  used in the above construction of the de-biased estimators $\hat A_{j;r}^{(s,de)}$. 
Suitable candidates for 
$\hat{\Xi}^{(re)}=(\hat A^{(1,re)},\dots,A^{(d,re)})$ in the sparse VAR$(d)$ setting are the  adaptive lasso estimator \cite{kock2015oracle}; see  
also \cite{chernozhukov2018lasso} and the discussion following equation (\ref{eq.init-thr}) below.   Such an estimator for the $i$th row of $\Xi$ is obtained as 
\begin{equation} \label{eq.adlasso} 
\hat{\Xi}_{i,\cdot}^{(re)} = \argmin_{c=(c_1, \ldots, c_dp)^\top\in\R^{dp}}\frac{1}{n-d}\sum_{t=d+1}^n\big(X_{i;t} -c^\top W_{t-1}\big)^2 + \lambda_n \sum_{s=1}^{dp} \frac{|c_s|}{|\hat{c}_{j,s}|},
\end{equation}
where $ \hat{c}_j=(\hat{c}_{j,s}, s=1,2, \ldots, dp)$ are the lasso estimators  of $ \Xi_{i,\cdot}$ obtained  as
$\hat{c}_j=\argmin_{c\in\R^{dp}}(n-d)^{-1}\sum_{t=d+1}^n\big(X_{i;t} -c^\top W_{t-1}\big)^2 + \lambda_n \|c\|_1$ and $\lambda_n = C\sqrt{\log(p)/n}$ is a regularized parameter. 

Estimation of  the covariance matrix $\Gammas$ is more involved. Recall   first that for a stable VAR(1) process with coefficient matrix $A=A^{(1)}$ we have $ \Gamma(0)=\sum_{j=0}^\infty A^j \Sigma_\eps (A^\top)^j=\operatorname{vec}^{-1}(I_{p^2}-A\otimes A)^{-1}\operatorname{vec}(\Sigma_\varepsilon)$.  This expression 
can be generalized to a VAR(d) process using its stacked VAR(1) representation, i.e.,  the representation $ W_t = \mathds{A} W_{t-1} +\mathds{U}_t$; see Appendix A for the definition of $\mathds{A} $ and   $\mathds{U} $. This representation leads to the expression  
\begin{equation}
    \label{eq.GammaHat}
 \Gamma^{(st)}(0)=\sum_{j=0}^\infty \mathds{A}^{j}\Sigma_{\mathds{U}} (\mathds{A}^\top)^j=\operatorname{vec}^{-1} \Big((I_{(dp)^2}-\mathds{A}\otimes\mathds{A})^{-1} \operatorname{vec}(\Sigma_{\varepsilon})\Big).
\end{equation}
The above  expression suggest that  a consistent estimator $\hat{\Gamma}^{(st)}(0)$  of $\Gamma^{(st)}(0)$ can be obtained  
by plugging in (\ref{eq.GammaHat})  estimators $\hat{A}^{(s)}$, $s=1,2, \ldots, d$  and $ \hat{\Sigma}_\varepsilon$  of the coefficient matrices $A^{(s)}$, $s=1,2, \dots,d$  and of the covariance matrix  $\Sigmaeps$, respectively, provided these estimators satisfy certain consistency properties.  

 To elaborate on the  construction of the estimator $ \hat{A}^{(s)}$,   as expression (\ref{eq.GammaHat}) shows, $\|\cdot\|_\infty $ consistency of  $\hat{\Gamma}^{(st)}(0)$  at a desired rate, requires consistency at an  appropriate rate of $\sum_{s=1}^p \|(\hat{A}^{(s)})^\top-(\A^{(s)})^\top\|_\infty= \sum_{s=1}^p \|\hat{A}^{(s)}-A^{(s)}\|_1 $ and of $\sum_{s=1}^p \|\hat{A}^{(s)}-A^{(s)}\|_\infty $. That is, we need to  control   the error made in estimating the coefficients $A_{j;r}^{(s)}$ in the  rows {\it and} in the columns of the matrices  $A^{(s)}$, $s=1, \ldots, d$,  taking into account that the dimension of these matrices is allowed to  increase to infinity with $n$. These considerations  motivate  the following estimator  of  $ A_{j;r}^{(s)}$:
  \begin{equation} \label{eq.init-thr}
    \hat A_{j;r}^{(s)}  = \hat A_{j;r}^{(s,adlasso)} \ind{\big(| \hat A_{j;r}^{(s,adlasso)} | \geq a_n,|\hat A_{j;r}^{(s,YW)} | \geq a_n \big)},
    \end{equation}
$j,r=1,\dots,p$, $s=1,\dots,d$. Here $\hat A_{j;r}^{(s,adlasso)}$ is the adaptive lasso estimator (\ref{eq.adlasso}), see also  \cite{kock2015oracle},  $\hat A_{j;r}^{(s,YW)}$ is the regularized  Yule-Walker type estimator, see  \cite{han2015direct} and  $ a_n =C\sqrt{\log(p)/n}$ is a   threshold parameter.  Notice that the regularized Yule-Walker estimator of the $j$th column of the matrix $\Xi$ is obtained as
\begin{equation} \label{eq.regRW}
\hat{\Xi}^{(YW)}_{\cdot,j}=\argmin_{c\in \R^{dp}, \|c\|_1} \|\hat{\Gamma}_{W}(0) c - \hat\Gamma_{W}(1)e_j \|_\infty \leq \lambda_n,
\end{equation}
where $\hat{\Gamma}_{W}(0)=(n-d)^{-1}\sum_{t=d}^n W_t W_t^\top$, 
$ \hat{\Gamma}_{W}(1)=(n-d)^{-1}\sum_{t=d}^{n-1} W_t W_{t+1}^\top$ and $\lambda_n>0$ is a regularization parameter.  To further  elaborate on the motivation leading to the estimator (\ref{eq.init-thr}), 
observe  that  it is obtained by using the combined support of two initial estimators,  that is of $\hat A_{j;r}^{(s,adlasso)}$ and of $\hat A_{j;r}^{(s,YW)} $. This is done in order to  achieve a desired   row- and columnwise $\ell_1$-consistency of the estimator $\hat{A}_{j;r}^{(s)}$ respectively of $ \hat A^{(s)}$, $s=1,\ldots, d$.  
In particular,  \cite{kock2015oracle} obtained  under some conditions which include Gaussian innovations, that   the row-wise $\ell_1$-error of the adaptive lasso has the rate $O_P(\kp \sqrt{(\log(n)\log(p)\log(d))^5/n})$. Recall that the adaptive lasso  is build up row-wise and therefore, without any  further restrictions,  the column-wise estimation error cannot be controlled by this estimator. This is  why   the support  of $A^{(s)}$ in (\ref{eq.init-thr}) is also estimated  by thresholding the second  estimator,  $\hat{A}_{j;r}^{(s,YW)}$.  \cite{han2015direct} showed  that, under some conditions which also include Gaussian innovations,  $\|\hat{A}^{(s,YW)}-A\|_1=O_P(\kp \sqrt{\log(p)/n})$, where this rate  refers to the  column-wise estimation error.  Therefore,  
estimator (\ref{eq.init-thr}) allows for the control of both errors made in estimating the coefficients of the parameter matrices $A^{(s)} $, $s=1, \ldots, d$. 

To estimate the innovation covariance matrix $\Sigmaeps$,  several approaches  exist which depend on the sparsity assumptions one wants to impose on  $\Sigmaeps$; see \cite{pourahmadi2013high} for a discussion in the i.i.d. setting. If one  imposes  sparsity assumptions on $\Sigma_\varepsilon$ then  \cite{bickel2008}, \cite{el2008operator}, \cite{cai2011adaptive}, see also  Lemma~\ref{lem.est.var.eps}, provide some thresholding-based  approaches for estimating $\Sigmaeps$. If the  sparsity assumptions are imposed on  $\Sigmaeps^{-1}$,  we refer to   \cite{cai2016estimating},  and if one solely requires that 
 $p/n\rightarrow c \in (0,1)$, to  \cite{,ledoit2012nonlinear} for approaches to  estimate the corresponding covariance  matrix. 
 
 We elaborate here on the case where  sparsity assumptions are imposed on $ \Sigma_\varepsilon$.
Let $\hat \eps_t=X_t-\sum_{s=1}^d \hat A^{(s)} X_{t-j}, t=d+1,\dots,n$, be the estimated residuals 
and assume that $\max_j \|\Sigma_\eps e_j\|_0=O(\cp)$ and  $ \|\Sigma_\eps\|_1=O(\cp)$. For $\bar {\hat {\eps}}_n=1/(n-d) \sum_{t=d+1}^n \hat \eps_t$ let 
$\widetilde \Sigma_{\eps}=(n-d)^{-1}\sum_{t=d+1}^n (\hat \eps_t- \bar {\hat {\eps}}_n)(\hat \eps_t- \bar {\hat \eps}_n)^\top$ and 
\begin{align}
\hat \Sigma_{\eps}=\Big(\widetilde \Sigma_{ \eps, i,j} \ind\{|\widetilde \Sigma_{\eps,i,j}|\geq b_n\}\Big)_{i,j=1,2, \ldots, p}, \label{eq.sigma}
\end{align}
where $\widetilde \Sigma_{\eps,i,j}$ denotes the $(i,j)th$ element of $ \widetilde\Sigma_\eps$; see Lemma~\ref{lem.est.var.eps} for properties of the estimator $\hat\Sigma_\eps$ and more specifically that  this estimator  satisfies Assumption 1(\ref{ass1.4}) of the next section.  In particular, it is shown  in this lemma,  under the assumption of  Gaussian innovations $\eps_t$, that  $\|\hat \Sigma_{\eps}-\Sigmaeps\|_1=O_P(\cp\sqrt{\log(p)/n})$. 
The results of \cite{bickel2008}, Section 2.3 and \cite{el2008operator},  indicate that for the non-Gaussian case  slower rates should  be expected. 
However, \cite{cai2011adaptive} showed that with a more refined thresholding strategy the same rate can also be obtained in the non-Gaussian i.i.d. case. They suggest using  individual thresholding values instead of applying  an universal thresholding parameter.  A simple modification of (\ref{eq.sigma}) taking their considerations into account leads to 
$
\hat \Sigma_{\eps}=\Big(\widetilde \Sigma_{ \eps, i,j} \ind\{|\widetilde R_{\eps,i,j}|\geq b_n\}\Big)_{i,j=1,2, \ldots, p}
$
where $\widetilde R_{\eps,i,j}=\widetilde \Sigma_{\eps,i,j} / (\widetilde \Sigma_{\eps,i,i} \widetilde \Sigma_{\eps,j,j})^{1/2}$.

 Note  that if the estimators $ \hat{A}^{(s)}$ are such that the matrix polynomial $ \hat{A}(z) = I -\sum_{s=1}^d \hat{A}^{(s)} z^s$ has all its roots outside the unit disc, i.e. $\rho(\hat\A)<1$, and if $ \hat\Sigma_\varepsilon $ is positive definite, then $ \hat\Gamma^{(st)}(0)$ is positive definite and the estimator   $\hat{\beta}_r  = \Gammah^{-1} e_r/(e_r^\top \! \Gammah^{-1} \! e_r)$  is well defined.  Furthermore, given estimators $ \hat{A}{(s)}$, $s=1,2, \ldots, d$ and $\hat{\Sigma}_\varepsilon$,  an alternative  estimator of  the autocovariance matrix $\Gamma^{(st)}(0)$ of interest can be obtained as 
 \begin{equation}\label{eq.estspec}
 \breve{\Gamma}^{(st)}(0)=  \int_{-\pi}^\pi  \hat{f}(\lambda) d\lambda, \text{ where }   \hat{f}(\lambda) = \big(I_{p} - \sum_{s=1}^d\hat A^{(s)}e^{-i s\lambda}\big)^{-1}\hat\Sigma_{\varepsilon} \big(\big(I_{p} - \sum_{s=1}^d\hat A^{(s)}e^{is \lambda}\big)^{-1}\big)^{\top}, 
 \end{equation}
 $\lambda\in [\pi,\pi],$  is the $p\times p$ spectral density matrix of the estimated VAR(d) process.  In the following we will focus on the estimator $\hat{\Gamma}^{(st)}(0)$ based on expression  (\ref{eq.GammaHat}).

\subsection{Asymptotic distribution of de-sparsified estimators} \label{subsec2.3}
To derive the asymptotic distribution of  $\hat A_{j;r}^{(s,de)}$ we first observe  that
by substituting  expression  $X_{t;j}=e_j^\top \Xi W_{t-1}+\eps_{t;j}$, $\Xi=(A^{(1)},\dots,A^{(d)})$ in (\ref{de_est}), that  this estimator can be  written as, 
\begin{align}    \label{de_est2}
\hat A_{j;r}^{(s,de)} =& A_{j;r}^{(s)}+\big(\sum_{t=d+1}^n \hat Z_{t-1;(s-1)d+r} W_{t-1;(s-1)d+r}\big)^{-1}
\Big[\big(\sum_{t=d+1}^n \hat Z_{t-1;(s-1)d+r} \eps_{t;j}\big) \\
& \ \ \ \ +
(\sum_{t=d+1}^n \hat Z_{t-1;(s-1)d+r} W_{t-1;-((s-1)d+r)}(\Xi_{j;-((s-1)d+r)}-\hat \Xi_{j;-((s-1)d+r)}^{(re)})\Big]. \nonumber
\end{align}
Representation $(\ref{de_est2})$ suggests that  asymptotic normality of the de-sparsified  estimator, more precisely  of $ \sqrt{n}(\hat A_{j;r}^{(s,de)}-
  A_{j;r}^{(s)} )$ 
can be established,   due  to the contribution of the second term on the right hand side of (\ref{de_est2})  and the asymptotic negligibility 
of the last term  
 since this term depends mainly on   the estimation error  $ \Xi_{j;-((s-1)d+r)}-\hat \Xi_{j;-((s-1)d+r)}^{(re)}$. 
Theorem \ref{thm.clt} below confirms that this intuition is indeed  true. However, to state  precisely this theorem, we impose some conditions   on the underlying VAR process,  on its sparsity as well as on  
    the consistency properties of the   estimators involved in the construction of the  de-sparsified estimator $\hat A_{j;r}^{(s,de)}$. 

\begin{asp} \label{ass1} {~}
\begin{enumerate}[(i)]
\item  $\max_{1\leq i \leq p}\sum_{s=1}^d \|e_i^\top A^{(s)}\|_0 = O(\kp)$,  $\max_{1\leq i \leq p}\sum_{s=1}^d \| A^{(s)}e_i\|_0 = O(\kp)$ and
$ \max_{1\leq i \leq p} \| e_i^\top \Sigma_\varepsilon \|_0 = O(\cp)$.
 \label{ass1.1}
\item 
There exists a $\lambda\in(0,1)$ such that $\rho(\A)\leq \lambda$ and  for any $k \in \N$, 
\[\|\mathds{A}^k\|_2 =O(\lambda^k), \|\mathds{A}^k\|_1 =O(\kp \lambda^k) \ \ \mbox{and}\ \ 
\|\mathds{A}^k\|_\infty =O(\kp \lambda^k).\]  
\label{ass1.2}
\item The estimator $ \hat{A}^{(s,re)}$  used in estimating the bias term in (\ref{de_est}) 
satisfies
$$\sum_{s=1}^d \|\hat A^{(s,re)}-A^{(s)}\|_\infty=O_P(\kp \sqrt{\gp/n})$$
while  the estimator $ \hat{A}^{(s)}$ used in estimating $ \hat\Gamma^{(st)}(0)$ satisfies the above  condition regarding the $ \|\cdot \|_\infty$ norm 
and additionally that
\[\sum_{s=1}^d \|\hat A^{(s)}-A^{(s)}\|_1=O_P(\kp \sqrt{\gp/n}). \]
\label{ass1.3}
\item 
$ \|\Sigma_\varepsilon - \hat \Sigma_\varepsilon\|_1 = O_P(\cp \sqrt{\gp/n})$. 
\label{ass1.4}
\item $ \| \Gamma^{(st)}(0)^{-1} \|_\infty = O(\kpp)$.
\label{ass1.5}
\item  $(E (v^\top \eps_0)^q)^{1/q}<\infty$ for all $\|v\|_2<\infty$  for some $q\geq 8$ such that 
$p/(n^{q/4-1}\log^{q/4}(p))=O(1)$ and $\kp^5\cp\kpp^2 \gp/\sqrt{n}=o(1)$.
\label{ass1.6}
\end{enumerate}
Notice that   $g$ appearing in the above expressions  is  an increasing function of the dimension $p$  which is allowed to  be different from expression to expression. 
\end{asp}

Some comments on the above assumptions are in order.  The first two statements  of Assumption 1(\ref{ass1.1}) refer to  the sparsity behavior of the coefficient  matrices $ A^{(s)}$. According to this assumption, $\kp$ is the  maximum number of non-zero coefficients  that can appear in each row, respectively, in each column of  the matrices $ A^{(s)}$. This  implies 
that $ \sum_{s=1}^d\|\operatorname{vec}(A^{(s)})\|_0= O(\kp p)$ which is a much more flexible sparsity setting for VAR(d) models compared, for instance, to the one used in  \cite{basu2015} and which requires that $ \sum_{s=1}^d\|\operatorname{vec}(A^{(s)})\|_0 =k$.  Notice that  increasing the  dimension $p$ of the VAR(d) process  means that new time series are included. If none of these  new time series is a  white noise  processes,  then the  number of non-zero parameters   $\sum_{s=1}^d\|\operatorname{vec}(A^{(s)})\|_0$ will increase by at least the same order as the dimension $p$ of the process increases. Therefore, the requirement that $ \sum_{s=1}^d\|\operatorname{vec}(A^{(s)})\|_0 =k$, together with the assumption of Gaussianity, essentially means   that only i.i.d. processes can be added to the vector $X_t$  if  its  dimension $p$  increases with $n$. The third statement of Assumption 1(\ref{ass1.1}) refers to the sparsity of the innovation covariance matrix $\Sigma_\varepsilon$. They have to be modified appropriately if one prefers to put   sparsity assumption on the inverse matrix $ \Sigma_\varepsilon^{-1}$. 

Assumption~\ref{ass1}$(\ref{ass1.2})$ imply that the VAR model considered is stable and further satisfies some kind of uniformity  regarding the  decay behavior of the  coefficient matrices $\A^k$. This assumption seems    necessary because the dimension $p$ of the process $\{X_,t\in\Z\}$   is allowed to increase to infinity with $n$. To elaborate, let for simplicity $ d=1$. Then it is well known that $ A^k=B_k$, where $ B_k$ are the coefficient matrices appearing in the infinite order causal, moving average representation $ X_t=\sum_{k=1}^\infty B_k \varepsilon_{t-k} + \varepsilon_t$ of $X_t$. Furthermore, these coefficients decrease exponentially fast to zero, that is $ \|A^k\|_{j} \leq C \rho^k$, for $j \in \{1,\infty\}$ and for some constants $C$ and $\rho \in (0,1)$, which will eventually vary when the   dimension of the process changes. This situation  is taking care off in Assumption 1(\ref{ass1.2}) which controls the 
decay rate of $ A^k$ by the same constant $\lambda \in (0,1)$. 

Assumption 1(\ref{ass1.3}) states the required row wise consistency rates for the estimator $ \hat{A}^{(s,re)}$ and the row- and column wise consistency rates of   $\hat{A}^{(s)}$.  Notice first that as this assumption shows, we can  choose $\hat{A}^{(s,re)} $ to be the same estimator as  $\hat{A}^{(s)}$, where  the latter estimator is  given in (\ref{eq.init-thr}). Furthermore, and as we have already mentioned in the discussion following equation (\ref{eq.init-thr}), under Gaussian assumptions on the innovations $\varepsilon_t$, the  estimator $\hat{A}^{(s)}$  
satisfies by construction the required consistency rates with respect to both norms. Observe that we left $g$ unspecified since its  particular form  depends, among other things, also  on the distribution of the i.i.d. innovations. As already mentioned,  for Gaussian innovations, \cite{kock2015oracle} showed under standard lasso conditions that  the corresponding  estimators of $ A^{(s)}$  satisfy  the rates of Assumption~\ref{ass1}$(\ref{ass1.3})$  with the function $\gp$ given by $\gp=(\log(n)\log(p)\log(d))^5$. Similarly results hold also for the sub-Gaussian case, see \cite{zheng2018testing} and \cite{chernozhukov2018lasso}. If the distribution of  the i.i.d innovations possesses has only a limited number of finite moments, \cite{chernozhukov2018lasso} pointed out that $g$ drops to a polynomial rate; see among others Comment 5.4 of the aforementioned paper.

Assumption 1(\ref{ass1.4}) refers to the consistency rates of the estimator $\hat\Sigma_\eps$. Notice that as Lemma~\ref{lem.est.var.eps} shows, the thresholded estimator $ \hat\Sigma_\eps$ given in (\ref{eq.sigma}) satisfies under the imposed sparsity assumptions and for  Gaussian innovations the  consistency condition stated in this assumption. Assumption 1(\ref{ass1.5}) controls the grow rates of inverse of the stacked covariance matrix $\Gamma^{(st)}(0)$. Note that the stacked covariance itself is affected by the increasing number of non-zero elements $\kp$ and $\cp$ but the inverse $ \Gamma^{(st)}(0)^{-1}$ is also affected by the smallest eigenvalue of $\Gamma^{(st)}(0)$. Assumption 1(\ref{ass1.6}) imposes a moment condition on the innovations $\varepsilon_t$ and specifies its effect on the allowed growth rates of the sparsity behavior of the VAR process when the dimension $p$ of the process  increases with $n$.

We now  state a  result which establishes asymptotic normality of the de-sparsified estimators given in (\ref{de_est}).

\begin{thm}
 \label{thm.clt} Suppose that Assumption \ref{ass1} is satisfied. Then for all $j,r \in \{1,\dots,p\}$ and $s=1,\dots,d$, we have, as $n\to \infty$, 
\begin{align}
\sqrt n \left(\hat A_{j;r}^{(s,de)}- A_{j;r}^{(s)}\right) &  \overset{D}{\to} \mathcal{N}(0,s.e.(j,r,s)^2),
\end{align}
where
$
s.e.(j,r,s)^2
={\Sigma_{\eps,j;j}}(e_{(s-1)p+r}^\top (\Gammas)^{-1} e_{(s-1)p+r})$.
\end{thm}


The following  lemma   deals with  the limiting covariance of the same  estimators.

\begin{lem}\label{lem.dependence}
Under the conditions of Theorem~\ref{thm.clt} we have  for any  $j_1,j_2,r_1,r_2 \in \{1,\dots,p\}$, and $ s_1,s_2 \in\{1,\dots,d\}$, that as $n \to \infty$, 
\begin{align}
\cov(\sqrt n \hat A_{j_1;r_1}^{(s_1,de)},  \sqrt n \hat A_{j_2;r_2}^{(s_2,de)}) \to \Sigma_{\eps,j_1;j_2} e_{(s_1-1)p+r_1}^\top (\Gammas)^{-1}e_{(s_2-1)p+r_2}.
\end{align}
\end{lem}

\section{Bootstrap Based Inference for  Sparse Vector Autoregressions}

\subsection{Bootstrapping de-sparsified estimators}  \label{se.bootalgo}
Due  to their regular limiting distribution,  de-sparsified estimators   can be used as a  vehicle for  statistical inference for   sparse VAR($d$)  models. 
In this section we introduce a  bootstrap procedure for this purpose. Note that in contrast to the i.i.d. regression setting, in the VAR($d$) setting considered in this paper,  the ``regressors'' $(X_{t-1},\dots,X_{t-d},t=d+1,\dots,n)$, i.e. the vectors  $\{W_{t-1}, t=d+1, \ldots, n\}$  are not fixed; they are dependent random vectors. In order to capture the randomness and the dependence properties of these vectors,  an appropriate  bootstrap procedure has  to be applied,  which uses an estimated and   sparsified version of the underlying  VAR$(d)$ model
to  generate  pseudo time series  $X_1^*, X_2^*, \ldots, X_n^*$.  
This pseudo time series can then be used  to infer properties of $ A_{j;r}^{(s)}$  using the estimated distribution of     $\sqrt n (\hat A_{j;r}^{(s,de)}- A_{j;r}^{(s)}) $. 

A standard procedure to bootstrap a VAR process in the finite, low dimensional case, 
is 
to estimate the model and to generate pseudo time series by using the estimated model structure and by drawing with replacement from the estimated set of residuals. However, this procedure  fails in the high-dimensional case. To elaborate, let $\hat \eps_t$ be the centered, estimated residuals of a VAR fit. If the pseudo innovations $\eps_t^*$ are obtained by drawing with replacement from the set $\{\hat \eps_t, t=d+1,\dots,n\}$, then $\var^* \eps_t^*=1/(n-d)\sum_{t=d+1}^n \hat \eps_t \hat \eps_t^\top=\tilde \Sigma_\eps.$ However,  it is well known that  the sample covariance matrix $\tilde \Sigma_\eps$ is not a  valid estimator  of $ \Sigma_\varepsilon$  in the high-dimensional case. Thus, such a bootstrap approach would fail to appropriately mimic the second-order properties of the time series
at hand. To tackle this problem, a different strategy is proposed.
To elaborate note  first  that as Theorem \ref{thm.clt} shows, the limiting distribution of the de-sparsified estimators is not affected by  the  distribution of the  $\varepsilon_t$'s but solely by their  second order properties, that is by the covariance matrix $ \Sigma_\varepsilon$. Thus, we generate the pseudo innovations  $\{\varepsilon_t^*\}$ as Gaussian random vectors  with covariance matrix $\Sigmah$. The estimator $\Sigmah$ is consistent in the high-dimensional setting which ensures that $\{\varepsilon_t^*\}$  generated in this way will  correctly mimic  the second-order properties of the true innovations $\{\eps_t\}$. Notice that instead of the Gaussian distribution any other distribution could be used as well,  provided this distribution  possesses at least $q$ moments. 
Furthermore,  using the Gaussian distribution to generate the pseudo innovations, does not affect the  asymptotic validity of the bootstrap procedure proposed in the sequel, when this procedure is applied to  estimate the  distribution of the de-sparsified estimators and to perform tests for groups of coefficients.


Apart from using the Gaussian distribution with covariance matrix $ \hat{\Sigma}_{\eps}$ to generate the  pseudo innovations, 
the thresholded estimators 
$\hat A^{(s)}$ given in (\ref{eq.init-thr})  
are used to generate the pseudo time series. This  enables  the bootstrap  generated 
 pseudo time series $X_1^*, X_2^*, \ldots, X_n^*$ to  appropriately   mimic   the  dependence and the sparsity  properties of the underlying VAR($d$) model. Note that 
thresholding is important in the sparse  high dimensional context since it guarantees sparsity of the VAR model used in the bootstrap world.  Using the aforementioned  estimators $ \hat{A}{(s)}$ and $ \hat{\Sigma}_\eps$, the generated pseudo time series $X_1^*, X_2^*, \ldots, X_n^*$   stems from a VAR($d$) model  which has $\Gammah$ as its lag-zero autocovariance matrix.
Recall that the estimator $\hat{A}^{(s)}$ given  in (\ref{eq.init-thr}),   selects the non-zero components  of the matrices $ A^{(s)}$ by means of thresholding  two  initial estimators.  Notice that, as we will see,  consistency of this estimator  as well as 
 a bounded row- and column-wise support property is sufficient for  validity of the  bootstrap procedure  proposed.  More specifically, and as a careful inspection of the proof of Theorem~\ref{thm.bootstrap.lasso}  below shows, a full sign recovery, which will require a minimal signal strength condition on the coefficients $A^{(s)}_{j;r} $, is not needed. Notice that  under such an additional minimal signal strength condition,  consistent estimation of the support of $A^{(s)}$, $s=1,\dots,d$, also  can  be established; see Theorem 7 of \cite{kock2015oracle} and Corollary 1 of \cite{han2015direct}.  Lack of such a    minimal signal strength condition 
 does not affect the consistency and the  aforementioned row- and columnwise boundedness properties of the  estimators $ \hat{A}^{(s)} $,  which are used for the bootstrap.  


The  bootstrap algorithm  proposed   consists   of  the following four steps. 

\begin{enumerate}
\item[] {\it Step 1:} \  Generate i.i.d. pseudo innovations $\eps_t^*$  from $\mathcal{N}(0,\hat\Sigma_\eps)$. 
\item[] {\it Step 2:} \  Generate a pseudo time series  $ X_1^*, X_2^*, \dots,X_n^* $  using the model equation 
$$X_t^*=\sum_{s=1}^d \hat A^{(s)} X^*_{t-s} + \eps_t^*, \ \ t=1,2, \ldots, n$$
and some starting values $ X_0^*, X_{-1}^*, \ldots, X^*_{1-d}$.
 \item[] {\it Step 3:} \ Let $ \hat A_{j;r}^{*(s,de)}$ be the same de-sparsified estimator of $ A_{j;r}^{(s)}$ as  the estimator $ \hat A_{j;r}^{(s,de)}$ given in $(\ref{de_est})$,  but based on the pseudo time series $X_1^*, X_2^*, \ldots, X_n^*$. 
 \item[] {\it Step 4:} \ Approximate the distribution of $  \sqrt n (\hat A_{j;r}^{(s,de)}- A_{j;r}^{(s)})$ by that of the bootstrap analogue $\sqrt n (\hat A_{j;r}^{*(s,de)} - \hat A_{j;r}^{(s)}) $.
\end{enumerate}

Validity of the bootstrap procedure in approximating consistently the distribution of de-sparsified estimators  is established in the following theorem, where  Mallow's $d_2$ metric is used to measure the distance between two distributions.  
 For two random variables $X$ and $Y$,    Mallow's distance between their distributions  is  defined as 
 $d_2(X,Y)=\{ \int_0^1 \left(F_X^{-1} (x)-F_Y^{-1}(x)\right)^2 dx\}^{1/2}$ where   $F_X$ and $F_Y$ denote the cumulative distribution functions of  $X$ and $Y$, respectively; see \cite{bickel1981}. 

\begin{thm}
\label{thm.bootstrap.lasso}
Suppose that Assumption 1 is fulfilled.  Then, we have  for all $j,r \in \{1,\dots,p\}$ and  $ s\in\{1,\dots,d\}$, that as $n\to \infty$, 
\begin{align}
d_2\left(\sqrt n (\hat A_{j;r}^{(s,de)}- A_{j;r}^{(s)}),\sqrt n (\hat A_{j;r}^{*(s,de)} - \hat A_{j;r}^{(s)})\right)=o_P(1).
\end{align}
\end{thm}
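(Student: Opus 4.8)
The plan is to establish, conditionally on the data, the bootstrap counterpart of Theorem \ref{thm.clt} and then to upgrade convergence in distribution to convergence in the Mallows metric. Throughout let $P^*,E^*$ denote probability and expectation conditional on $X_1,\dots,X_n$, write $\sigma^2:=s.e.(j,r,s)^2$, and let $\gamma:=\big(\cov(X_{t-s;r},X_{t-i}):i=1,\dots,d\big)\beta_{L^s,r}^\dagger$ be the limiting denominator appearing in $s.e.(j,r,s)$. By construction the pseudo series solves $X_t^*=\sum_{s=1}^d\hat A^{(s,init,thr)}X_{t-s}^*+\eps_t^*$, so substituting $X_j^{*(n)}=X^{*(n)}\hat\alpha_j^{(init,thr)}+\eps_j^{*(n)}$ into the bootstrap version of (\ref{de_est}) gives, exactly as in (\ref{de_est2}), the identity $\sqrt n\big(\hat A_{j;r}^{*(s,de)}-\hat A_{j;r}^{(s,init,thr)}\big)=(D_n^*)^{-1}\tfrac{1}{\sqrt n}(\hat Z_{L^s,r}^{*(n)})^\top\eps_j^{*(n)}+R_n^*$, with $D_n^*=\tfrac1n(\hat Z_{L^s,r}^{*(n)})^\top L^sX_r^{*(n)}$ and $R_n^*=\sqrt n\,(D_n^*)^{-1}\big[\tfrac1n(\hat Z_{L^s,r}^{*(n)})^\top X_{L^s,-r}^{*(n)}\big]\big(\hat\alpha_{j,L^s,-r}^{(init,thr)}-\hat\alpha_{j,L^s,-r}^{*(init)}\big)$. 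The leading term will carry the Gaussian limit and $R_n^*$ will be shown asymptotically negligible in a sense strong enough for $d_2$.

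First I would transport the regularity apparatus to the bootstrap world. Conditionally on the data the pseudo series is a stationary, causal (Assumption \ref{ass.bootstrap}$(\ref{ass.b.6})$) Gaussian VAR$(d)$ whose coefficient matrices and innovation covariance are $\sqrt{\log p/n}$-consistent and uniformly sparse (Assumption \ref{ass.bootstrap}$(\ref{ass.b.1})$--$(\ref{ass.b.4})$); using the continuity of the spectral density and of the geometric $\beta$-mixing rate in the VAR parameters one argues that, with probability tending to one, this process satisfies — uniformly over the realisations of $(\hat A^{(\cdot,init,thr)},\hat\Sigma_\eps)$ — the compatibility/restricted-eigenvalue conditions and the sub-Gaussian deviation bounds underlying Proposition 4.1 of \cite{basu2015}. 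Hence the bootstrap lasso and the bootstrap nodewise lasso satisfy the conditional analogue of Assumption \ref{ass1}$(\ref{ass1.3})$, in particular $\sum_j\|\hat\alpha_j^{*(init)}-\hat\alpha_j^{(init,thr)}\|_1=O_{P^*}(\sqrt{\log p/n})$ and $\|\hat\beta_{L^s,r}^*-\beta_{L^s,r}^{\dagger,*}\|_1=O_{P^*}(\sqrt{\log p/n})$, where the bootstrap population nodewise coefficient $\beta^{\dagger,*}$ converges in probability to $\beta^\dagger$ because the bootstrap autocovariances converge to those of $\{X_t\}$. For the remainder, the Karush--Kuhn--Tucker conditions for the bootstrap nodewise lasso give $\big\|\tfrac1n(\hat Z_{L^s,r}^{*(n)})^\top X_{L^s,-r}^{*(n)}\big\|_\infty\le\lambda_n$, so $|R_n^*|\le|D_n^*|^{-1}\sqrt n\,\lambda_n\,\|\hat\alpha_j^{(init,thr)}-\hat\alpha_j^{*(init)}\|_1=O_{P^*}(\log p/\sqrt n)=o_{P^*}(1)$ by Assumption \ref{ass1}$(\ref{ass1.4})$ and $D_n^*\to\gamma\neq0$; writing $\hat Z_{L^s,r}^{*(n)}=Z_{L^s,r}^{\dagger,*(n)}-X_{L^s,-r}^{*(n)}(\hat\beta_{L^s,r}^*-\beta_{L^s,r}^{\dagger,*})$ with $Z_{t-s;r}^{\dagger,*}=(\beta_{L^s,r}^{\dagger,*})^\top(X_{t-1}^*,\dots,X_{t-d}^*)$, the same $\ell_1/\ell_\infty$ argument (the entries of $\tfrac{1}{\sqrt n}(X_{L^s,-r}^{*(n)})^\top\eps_j^{*(n)}$ are mean-zero, a past-measurable factor times the current innovation, and sub-Gaussian, hence $O_{P^*}(\sqrt{\log p})$ in $\ell_\infty$) shows $\tfrac{1}{\sqrt n}(\hat Z_{L^s,r}^{*(n)})^\top\eps_j^{*(n)}=\tfrac{1}{\sqrt n}\sum_t Z_{t-s;r}^{\dagger,*}\eps_{t;j}^*+o_{P^*}(1)$.

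Next comes the bootstrap limit theorem for the leading term. A law of large numbers for the geometrically $\beta$-mixing pseudo series gives $D_n^*\overset{P^*}{\to}\gamma$ and $\tfrac1n\sum_t(Z_{t-s;r}^{\dagger,*})^2\overset{P^*}{\to}(\beta_{L^s,r}^\dagger)^\top\big(\cov(X_{t-i_1},X_{t-i_2})_{i_1,i_2=1,\dots,d}\big)\beta_{L^s,r}^\dagger$, using $\beta^{\dagger,*}\to\beta^\dagger$ and $\hat\Sigma_\eps\to\Sigma_\eps$. Since $\xi_t^*:=Z_{t-s;r}^{\dagger,*}\eps_{t;j}^*$ is a martingale difference for $\mathcal F_t^*=\sigma(\eps_u^*:u\le t)$ (the factor $Z_{t-s;r}^{\dagger,*}$ is $\mathcal F_{t-1}^*$-measurable), with $E^*[(\xi_t^*)^2\mid\mathcal F_{t-1}^*]=\hat\Sigma_{\eps,j;j}(Z_{t-s;r}^{\dagger,*})^2$ and uniformly bounded fourth moments, the martingale central limit theorem for triangular arrays yields $\tfrac{1}{\sqrt n}\sum_t\xi_t^*\overset{D^*}{\to}\mathcal N\big(0,\Sigma_{\eps,j;j}(\beta_{L^s,r}^\dagger)^\top(\cov(X_{t-i_1},X_{t-i_2})_{i_1,i_2=1,\dots,d})\beta_{L^s,r}^\dagger\big)$ in probability. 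Dividing by $D_n^*$ and adding $R_n^*$, the displayed decomposition gives $\sqrt n\big(\hat A_{j;r}^{*(s,de)}-\hat A_{j;r}^{(s,init,thr)}\big)\overset{D^*}{\to}\mathcal N(0,\sigma^2)$ in probability.

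Finally, by Bickel and Freedman (1981) the metric $d_2$ metrizes weak convergence together with convergence of second moments, so the triangle inequality bounds the quantity in the theorem by $d_2\big(\sqrt n(\hat A_{j;r}^{(s,de)}-A_{j;r}^{(s)}),\mathcal N(0,\sigma^2)\big)+d_2\big(\mathcal N(0,\sigma^2),\sqrt n(\hat A_{j;r}^{*(s,de)}-\hat A_{j;r}^{(s,init,thr)})\big)$. The first term tends to $0$ by Theorem \ref{thm.clt} together with uniform integrability of $\{n(\hat A_{j;r}^{(s,de)}-A_{j;r}^{(s)})^2\}$ — a consequence of the $(2+\delta)$-th moment bounds following from Assumption \ref{ass1} — which forces the second moments to converge to $\sigma^2$ (equivalently one invokes Lemma \ref{lem.dependence} with $j_1=j_2$, $r_1=r_2$, $s_1=s_2$ and the asymptotic unbiasedness of the de-sparsified estimator). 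The second term tends to $0$ in probability by the previous paragraph together with the conditional uniform integrability of $\{n(\hat A_{j;r}^{*(s,de)}-\hat A_{j;r}^{(s,init,thr)})^2\}$, which follows from the boundedness of $(D_n^*)^{-1}$ with probability tending to one, the $\ell_1/\ell_\infty$ bound on $R_n^*$, and the Gaussianity of the pseudo series. The main obstacle is the step of transporting the lasso-regularity apparatus to the bootstrap world: one must show that, with probability tending to one, the random, data-dependent bootstrap VAR has spectral density bounded above and below uniformly over the relevant realisations of $(\hat A^{(\cdot,init,thr)},\hat\Sigma_\eps)$ and a uniform geometric mixing rate; this rests on the consistency in Assumption \ref{ass.bootstrap}$(\ref{ass.b.1})$, $(\ref{ass.b.2})$, the causality in $(\ref{ass.b.6})$, and continuity in the VAR parameters, after which everything else is a conditional replay of the proof of Theorem \ref{thm.clt}.
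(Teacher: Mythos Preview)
Your proposal is correct and follows the same overall skeleton as the paper's proof---the algebraic decomposition of $\sqrt n\big(\hat A_{j;r}^{*(s,de)}-\hat A_{j;r}^{(s,init,thr)}\big)$ into a leading term plus remainder, the transport of the \cite{basu2015} lasso rates to the bootstrap VAR, and the triangle inequality through $\mathcal N(0,\sigma^2)$---but several of your technical devices differ from those in the paper. First, you kill the remainder $R_n^*$ via the KKT identity $\big\|\tfrac1n(\hat Z_{L^s,r}^{*(n)})^\top X_{L^s,-r}^{*(n)}\big\|_\infty\le\lambda_n$, which is the standard de-sparsified-lasso argument; the paper instead splits $\hat Z^{*}$ into the population residual $Z^{*}$ plus an $X_{L^s,-r}^{*(n)}(\beta^*-\hat\beta^*)$ piece and bounds the two contributions separately by Cauchy--Schwarz and the prediction-norm rate. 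Both give the same $O_{P^*}(\log p/\sqrt n)$ order. Second, for the CLT on the leading term you exploit the martingale-difference structure of $Z_{t-s;r}^{\dagger,*}\eps_{t;j}^*$ and apply a martingale CLT, whereas the paper uses the causal moving-average representation $X_t^*=\sum_k B_k^*\eps_{t-k}^*$, truncates to an $M$-dependent array, and passes to the limit in $M$; your route is shorter and avoids the double limit, while the paper's route mirrors its proof of Theorem~\ref{thm.clt} verbatim. Third, you are explicit about the passage from weak convergence to $d_2$-convergence through uniform integrability of the squared statistic, a point the paper handles only implicitly (``the assertion then follows by the triangular inequality and Assumption~\ref{ass.bootstrap}$(\ref{ass.b.5})$''). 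Your acknowledgement that the delicate step is transporting the restricted-eigenvalue and deviation bounds to the random bootstrap VAR is exactly the point the paper also leans on \cite{basu2015} for; neither proof supplies a fully self-contained argument here.
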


\vspace*{0.3cm}

The next corollary establishes validity  of the corresponding studentized distributions as well. 

\begin{corollary}
\label{co.stud} Under the assumptions of Theorem~\ref{thm.bootstrap.lasso} we have, as $ n\rightarrow \infty$, that 
\begin{align}
d_2\left(\sqrt n (\hat A_{j;r}^{(s,de)}- A_{j;r}^{(s)})/ (\widehat {s.e.}(j,r,s)),\sqrt n (\hat A_{j;r}^{*(s,de)} - \hat A_{j;r}^{(s)})/(\widehat {s.e.}^*(j,r,s))\right)=o_P(1),
\end{align}
where 
\begin{align}
\widehat {s.e.}(j,r,s)^2= \hat \Sigma_{\eps,j;j} \left( e_{(s-1)p+r}^\top (\Gammah)^{-1} e_{(s-1)p+r}\right) 
\end{align}
and
\begin{align}
\widehat {s.e.}^*(j,r,s)^2=\hat \Sigma_{\eps,j;j}^* \left((e_{(s-1)p+r}^\top (\Gammahast)^{-1} e_{(s-1)p+r} \right).
\end{align}
Here $\hat \Sigma_{\eps,j;j}^*$ and  $\Gammahast$  denote the same quantities as $\hat \Sigma_{\eps,j;j}$ and $\Gammah$  but based on the bootstrap  pseudo time series   $X_1^\ast, X_2^\ast, \ldots, X_n^\ast$. The estimator $\hat \Sigma_{\eps,j;j}$ is the $j$th diagonal entry of the estimator $\hat\Sigma_\eps$ used in $ Step\ 1$ of  the bootstrap algorithm.
\end{corollary}
The last result of this section shows that the bootstrap version of the de-sparsified estimators considered, also can be  used to consistently estimate the 
 covariance of  the de-sparsified estimators of the coefficients $ A_{j_1;r_1}^{(s_1)} $ and  $ A_{j_2;r_2}^{(s_2)}$. Its proof follows along the same lines as the proof of  
 Lemma \ref{lem.dependence} and uses  arguments similar to those applied in the proof of Theorem \ref{thm.bootstrap.lasso}.

\begin{lem}\label{lem.dependence2}
Under the assumptions of Theorem~\ref{thm.bootstrap.lasso} we have  for any $j_1,j_2,r_1,r_2 \in \{1,\dots,p\}$ and $ s_1,s_2 \in\{1,\dots,d\}$,  that as $n \to \infty$, 
\begin{align}
\cov^\ast(& \sqrt n  \hat A_{j_1;r_1}^{*(s_1,de)},  \sqrt n \hat A_{j_2;r_2}^{*(s_2,de)}) 
\to \Sigma_{\eps,j_1;j_2} e_{(s_1-1)p+r_1}^\top (\Gammas)^{-1} e_{(s_2-1)p+r_2},
\end{align}
 in probability.
\end{lem}

\subsection{Testing statistical hypotheses} \label{sec.boottest}
Appropriately modified, the bootstrap procedure proposed also can  be used  in order to test   hypotheses of interest regarding  the dependence structure of the underlying VAR($d$)  model, like for instance,   testing that a subset of the parameters  of the VAR model is zero. To elaborate, let   $ G \subseteq \{(j,r,s) : j,r \in \{1,\dots,p\}  \ \mbox{and}  \ s \in\{1,\dots,d\} \}$
 be  a  subset of indices and consider  the following testing problem:

\begin{enumerate}
\item[$H_0$:] $ A_{j;r}^{(s)}=0$, for all  $(j,r,s) \in G$.
\item[$H_1$:]  There exists at least one $(j,r,s) \in G$ such that $A_{j;r}^{(s)}\not = 0$.
\end{enumerate}

We assume  in the following  that the restrictions on the parameter space imposed by the above null and alternative hypotheses, 
do  not violate the  causality  of the underlying VAR($d$) model, 
that is, we assume that the condition  $\det(\mathcal{A}(z))\not =0$ for all $z \leq  1$, is satisfied  under $H_0$ and under $H_1$.  

In order to test the above hypotheses and to avoid problems associated with inverting large scale covariance matrices, we propose to use the max-type  test statistic
\begin{align}
T_n=\max_{(j,r,s) \in G} \left\{\frac{\sqrt{n}|\hat A_{j;r}^{(s,de)}|}{\widehat {s.e.} (j,r,s)} \right\}.
\end{align}
For $\alpha \in (0,1) $ small, let $ m_{n,\alpha} $ be the upper $\alpha$-quantile of the distribution of $ T_n$ under the assumption that the null hypothesis is true. 
 $H_0$ is then rejected if $ T_n>  m_{n,\alpha}$. Critical values of  the test $T_n$ can be obtained using the model-based bootstrap procedure proposed in this paper. For this, the estimated and sparsified VAR model  used to generate the pseudo time series $X_1^*, X_2^*, \ldots, X_n^*$ is   modified in such a way   that it  
 satisfies the null hypothesis. This is important for  a  good size and  power behavior of the bootstrap based test.  To achieve this goal, the 
 parameter matrices $A^{(s)}$, $s=1,2, \ldots, d$,  are estimated under the restrictions imposed by the   null hypothesis, that  is   under the constrains $\hat A_{j;r}^{(s)}=0 $ for all  indices $( j,r,s) \in G$.  Using these restrictions on the matrices $\hat{A}^{(s)}$, the bootstrap algorithm proposed in Section~\ref{se.bootalgo}  is then applied to generate pseudo time series $ X_1^*, X_2^*, \ldots, X_n^*$ from which the bootstrap analogue of $T_n$ under validity of $H_0$   is calculated and which is given by  
 \begin{align}
T^*_n=\max_{(j,r,s) \in G} \left\{\frac{\sqrt{n}|\hat A_{j;r}^{*(s,de)}|}{\widehat {s.e.}^* (j,r,s)}\right\}.
\end{align} 
Let $ m^*_{n,\alpha} $ be the upper $\alpha$-quantile of the distribution of $ T^*_n$.  The bootstrap based test proceeds then by rejecting  $H_0$  if $ T_n>  m^*_{n,\alpha}$.
As the following theorem shows, the bootstrap succeeds  in consistently estimating the distribution of $ T_n$ under the null,  justifying, therefore, the use  of the bootstrap  critical values $ m^*_{n,\alpha}=F^{-1}_{T_n^*}(1-\alpha)$   for performing the test.

\begin{thm}
\label{thm.bootstrap.test}
If Assumption 1 under validity of the null hypothesis is fulfilled 
 then, as $n\to \infty$, 
\begin{align}
\sup_{x\in \R}\Big| P_{H_0}\big(T_n\leq x\big) - P\big(T^*_n \leq x| X_1, X_2, \ldots, X_n\big)\Big| =o_P(1),
\end{align}
where $ P_{H_0}(T_n\leq \cdot)$ denotes the distribution function of $T_n$ when  $ H_0$ is true.
\end{thm}

We can extent the applicability of the bootstrap procedure proposed also 
to the case where  the set $G$  of null hypotheses considered increases to infinity,  at an appropriate rate,   as $n$ increases to infinity. To state this dependence on the sample size, we write $G_n$ for $G$.
We then have the following theorem.

 \begin{thm} \label{thm.test.grow}
Let $|{\Gn}|=O(n^b), b>0$ such that $n^{1/q-1/2+b/q}((b+1)\log(n))^{3/2}=o(1)$. Then under the assumptions of Theorem~\ref{thm.bootstrap.test} we have that, as $n \rightarrow \infty$, 
$$
\sup_{c \in \R} \Big| P^\ast( \max_{(j,r,s) \in \Gn} \big| \frac{\sqrt{n}|\hat A_{j;r}^{*(s,de)}|}{\widehat {s.e.}^* (j,r,s)}\big| \leq c )-
P_{H_0}( \max_{(j,r,s) \in \Gn} \big| \frac{\sqrt{n}|\hat A_{j;r}^{(s,de)}|}{\widehat {s.e.} (j,r,s)}\big| \leq c )\Big| =o_P(1).
$$
\end{thm}

\section{Numerical Results}
We investigate the finite sample performance of the bootstrap procedure proposed  to infer properties of the sparse VAR model by means of simulations and we also discuss a real-life data application.  Notice that  we can use our bootstrap procedure to   construct  confidence intervals for the coefficients of the VAR model as well as to perform  tests of statistical hypotheses about the VAR parameters.   In this section, we focus on the problem of testing hypotheses about groups of model parameters. 
  
  All results presented in this section  are based on implementations in  \emph{R}, \citep{R}, of the procedures proposed in this paper. In the simulations as well as in the real-life data example, the estimator $\hat{A}_{j;r}^{(s)} $ of $A^{(s)}_{j:r}$ is based on the adaptive lasso; see Section 4 in  \cite{kock2015oracle}. To simplify calculations we do not make use of the second estimator $ \hat A^{(s,YW)}_{j;r}$ appearing in (\ref{eq.init-thr}). The reason for this is that  the adaptive lasso estimates showed a  very good finite sample performance regarding both norms, that is the $\|\cdot\|_1$ and the $\|\cdot\|_\infty$ norm and no   tendency towards an 
  out of scale increasing of the column-wise support has been observed. Moreover, in preliminary simulations we found that the adaptive lasso estimator 
  was not outperformed by the combined estimator (\ref{eq.init-thr}) and    the latter estimator  was computationally much more demanding.  
  Thus the estimator  $\hat{A}_{j;r}^{(s)}$ used in our simulations, was  obtained by thresholding  the adaptive lasso estimator with tuning parameter $\lambda_n$ and threshold parameter $a_n=\lambda_n$. The same estimator was  used    for the construction of  the de-sparsified estimators in the  bootstrap procedure.  
  The Bayesian Information Criterion (BIC) has been  used to select the tuning parameter $\lambda_n$. More specifically, the adaptive lasso implementation with BIC selection of $\lambda_n$   of the \emph{HDeconometrics} package,  \cite{garcia2017real},  which uses the \emph{glmnet} package,  \cite{glmnet}, has been applied.  Finally, the covariance matrix $\Sigma_\eps$  of the innovations has been  estimated using (\ref{eq.sigma}),
  where the threshold parameter $b_n$ has been chosen  by cross-validation and  using the implementation of \cite{cov_cross}.

In our numerical investigations, we have experienced that 
in situations where the set of hull hypotheses is  considerably large, that is,  $|G|$ is of the same order as $n$, the bootstrap-based test and  despite its consistency,  seems to be somehow conservative. This  is due to a finite sample bias appearing in estimating the upper percentage point  of the distribution of $T_n$ under $H_0$.  To improve the finite sample behavior the bootstrap based test, and inspired by \cite{efron1981nonparametric} for the construction of confidence intervals, see also \cite{efron1994introduction}, Section 14.3, 
we adapt to the testing context a bias correction procedure. This procedure is described in more detail in Appendix B of the Supplement Material and leads to the following bias-corrected bootstrap percentage point
$m^{*(B)}_{n,\alpha}=F^{-1}_{T^*_n} \Big( \Phi\big(\sqrt{2}\cdot \Phi^{-1}\big(P(T^+_n\leq T_n^*)\big) +  z_{\alpha}\big)\Big) $,
where $ \Phi$ is the distribution function of the standard normal and $\Phi(z_\alpha) =1-\alpha$.
Notice that estimation of the quantity $P(T^+_n\leq T_n^*) $ used in this bias correction procedure requires the implementation of a second bootstrap
experiment; see Appendix B for more details.

\subsection{Simulations} \label{sec.sim}
{\bf Example 1:} We generated  time series $X_1,X_2,\dots,X_n$ from the  VAR(1) model,
$X_t=A X_{t-1}+\eps_t$, with i.i.d. $  \eps_t \sim \mathcal{N}(0,\Sigma_\eps)$,  where 
 the coefficient matrices $A$ and $\Sigma_\eps$ possess a cluster (or  block) structure. That is,  only the coefficients within a small number of clusters are allowed to 
 be different from zero; see also 
 \cite{han2015direct} for the use of such VAR models. Each block is of size $20 \times 20$ and is given by the matrices $A_{BLOCK}^\xi$ and $\Sigma_{BLOCK}$ in Appendix C. Four entries on the main diagonal of $A_{BLOCK}^\xi$ are specified  by the choice of the  parameter $\xi$. This parameter controls in some sense the level of  dependence of the generated VAR model obtained. Two values  $\xi =0.6 $ and $ \xi=0.9$  are considered, which lead to the  maximal absolute eigenvalues  $\lambda_{\max}=0.7$ and $\lambda_{\max}=0.9 $, respectively,  of the coefficient matrix $A$.
 Note that the information that $A$ possesses a cluster structure was not used in the subsequent inference procedure. Therefore,  the same results could be obtained for the case where the indices are shuffled randomly; see Figure~\ref{fig:Acluster} for an illustration of the matrix $A$ and of a randomly shuffled version. The dimension $p$  of the VAR process is set equal to  $20, 100$ and $200$ where the case   $p=20$ results in a single  cluster. 


\begin{figure}
    \centering
    \includegraphics[width=0.495\textwidth]{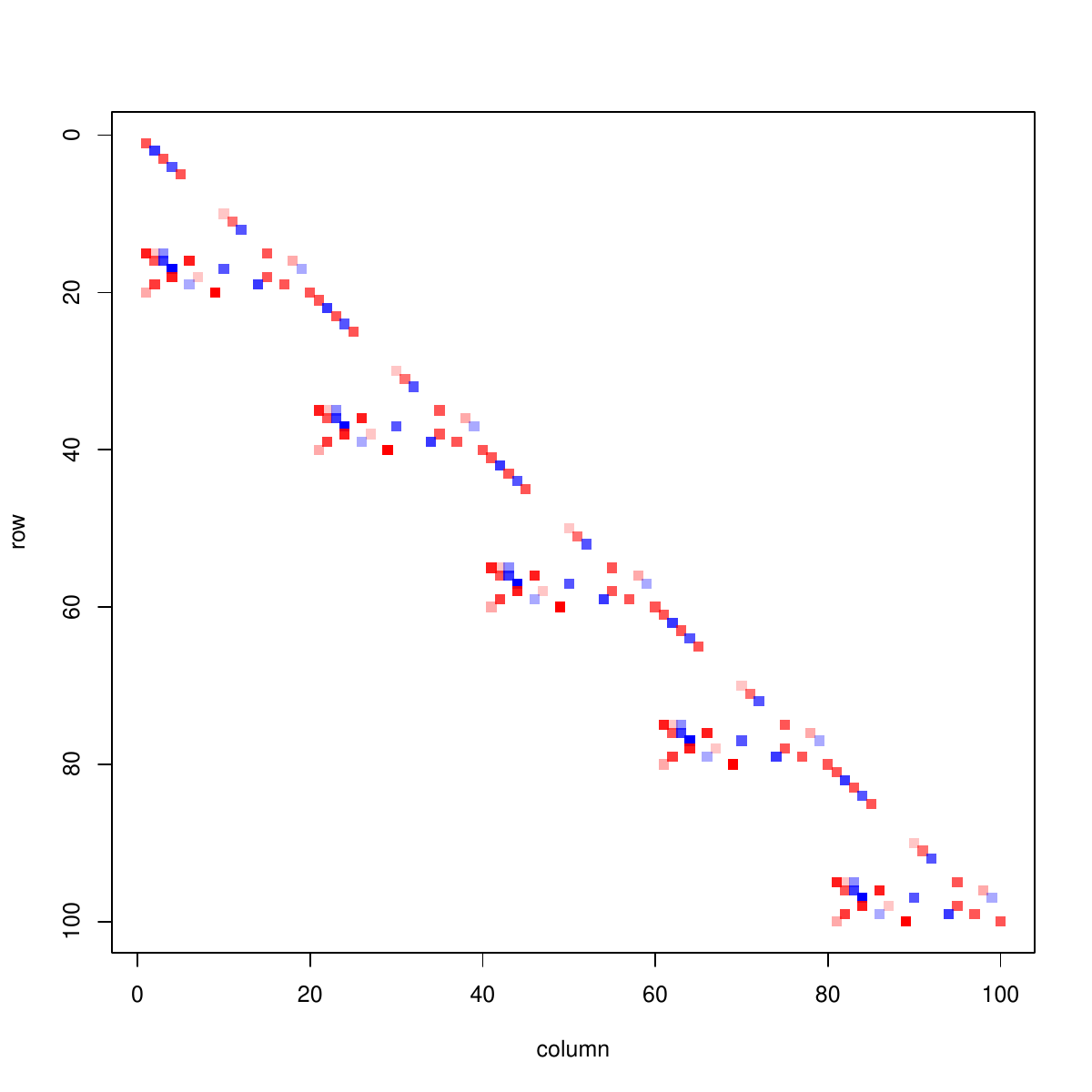}
    \includegraphics[width=0.495\textwidth]{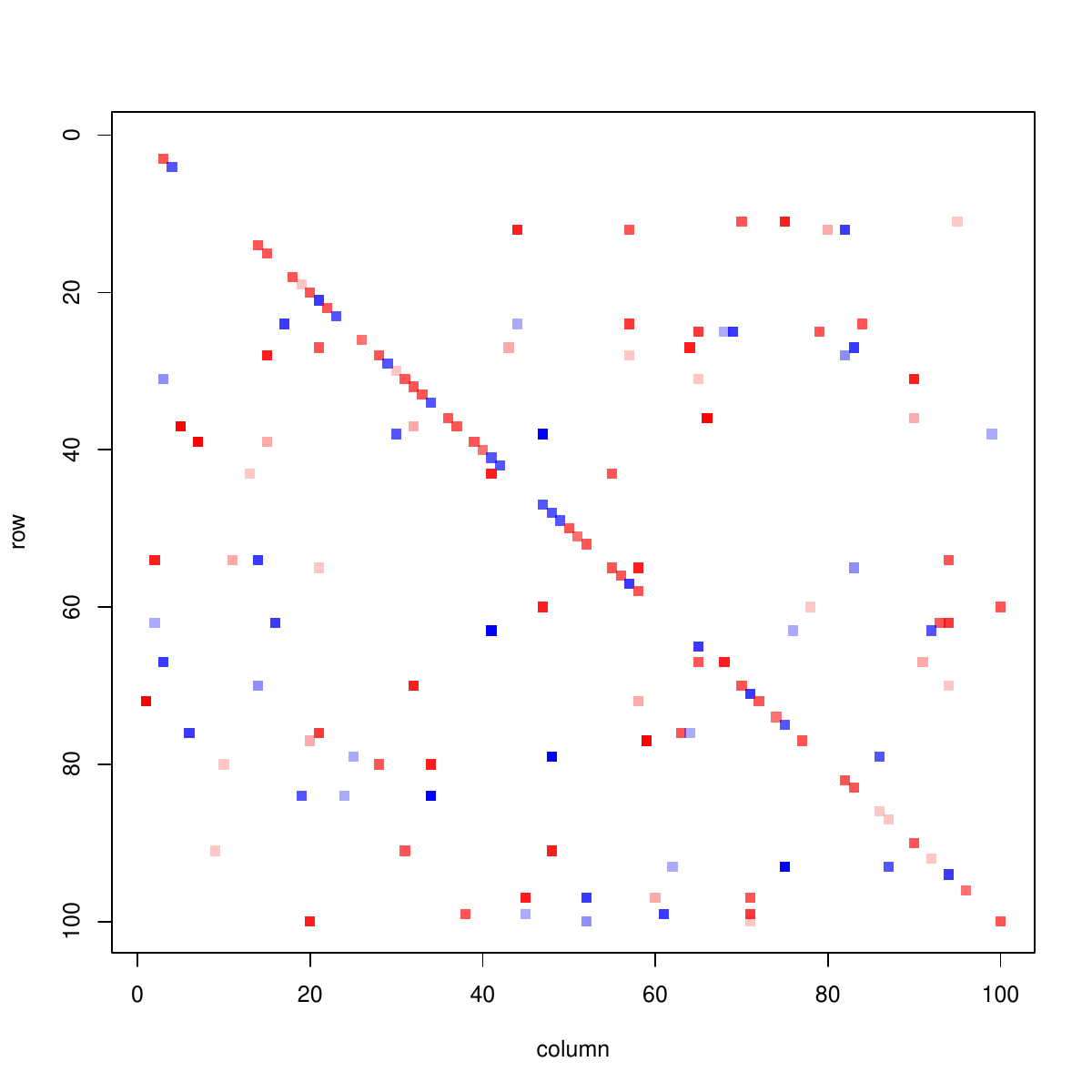}
    \caption{Structure of the coefficient matrix  $A$  for $p=100$: The left panel shows the non-zero elements of $A$ with ordered indices  and 
    the right panel the non-zero elements of the same matrix with shuffled indices. Positive coefficients are presented by red  and negative coefficients by blue dots.}
    \label{fig:Acluster}
\end{figure}

Let $G_1=\{(i,j) : i \in \{1,\dots,20\}, j =\{p-20+1,\dots,p\}$ be the first  set  
and   $G_2=\{(i,j) : i \in \{1,\dots,10\}, j =\{p-10+1,\dots,p\}$ be a second set of indices which correspond  to the upper-right corner of the matrix $A$. Note that $|G_1|=400$ while  $|G_2|=100$.
We  consider for $k \in \{1,2\}$, the  testing problem,
\begin{enumerate}
\item[$H_0$:]  $ A_{j;r}=0$  for all  $(j,r) \in G_k$, against
\item[$H_1$:]  There exists $(j,r) \in G_k$ such that  $ A_{j;r}\not = 0$.
\end{enumerate}
For $k=1$, $H_0$ corresponds to the case that the time series belonging to  the first cluster, that is,  $X_{t,1},\dots,X_{t,20},$ are not directly influenced  by the set of lagged time series  $X_{t-1,p-20+1},\dots,X_{t,p}$ belonging to  the last cluster. A similar interpretation occurs  for $k=2$ in which $H_0$ corresponds to the case where the first ten time series are not directly influenced by the  last ten lagged time series.
For  $p=20$ only the case  $k=2$ is considered since for $k=1$ the corresponding null hypothesis is that $X_t$ is a white noise process. We investigate the performance of the bootstrap based test under the null as well as its power against various  alternatives.  The alternatives considered  refer to the case where only one coefficient in the set $G$ is set different to zero and  equal to $\delta_a=0.3$.  Notice that $ \delta_a=0$ corresponds to the null hypothesis. The results obtained for $500$ repetitions  and $B=1,000$ bootstrap replications  are presented  in Table~\ref{example_H_0.table} (size) and  Table~\ref{example_H_1.table} (power). 

As Table~\ref{example_H_0.table} shows, the bootstrap-based test without the proposed bias correction, seems to be   conservative in most of the cases
and the differences between the empirical and the nominal level increase as  the size of the set $G$ increases. The degree of persistence as well as the dimension of the process seem not  to affect the size behavior  of the test. However, using the bias correction, the behavior of the bootstrap-based test improves considerably for all  sizes of the  set $G$ considered. 
Table~\ref{example_H_1.table} presents the empirical power of the bootstrap based test. As it can be seen, in all 
settings considered, the bias correction improves the  power behavior  of the test, where (as expected) 
the test is more powerful  for the  set $G_2$ than for the more larger set $G_1$. Notice that for the sample size of $n=128$ observations, detecting the deviation from the null  which is due to
the fact that only one out of the $ |G_1|=400$ elements is set equal to $0.3$,  is a very  challenging inference problem.  However, even in this case, the test has power and its power improves considerably as the 
 sample size increases.
Finally, as in the case of the size,  the dimension of the VAR process seems  to affect only slightly the power of the test. 

\begin{table}[h]
\begin{tabular}{|l|l|rrrrrr||rrrr|} 
\hline
                     &            & \multicolumn{6}{c||}{$G_2$ }                                                 & \multicolumn{4}{c|}{$G_1$ }                         \\ 
\hline
                     &      $p$      & \multicolumn{2}{c}{20} & \multicolumn{2}{c}{100} & \multicolumn{2}{c||}{200} & \multicolumn{2}{c}{100} & \multicolumn{2}{c|}{200}  \\ 
\hline

                   & $\alpha$  & 0.05 & 0.10            & 0.05 & 0.10             & 0.05 & 0.10               & 0.05 & 0.10             & 0.05 & 0.10               \\ 
\hline
$n$ &  $\lambda$& \multicolumn{10}{|l|}{With Bias Correction}\\

\hline
\multirow{2}{*}{128} & 0.7 & 0.04 & 0.08 & 0.05 & 0.11 & 0.03 & 0.07 & 0.05 & 0.11 & 0.03 & 0.06\\

&0.9 & 0.04 & 0.10 & 0.05 & 0.10 & 0.05 & 0.10 & 0.04 & 0.10 & 0.05 & 0.09\\

\multirow{2}{*}{200} & 0.7 & 0.04 & 0.10 & 0.05 & 0.11 & 0.06 & 0.09 & 0.04 & 0.10 & 0.06 & 0.11\\

&0.9 & 0.06 & 0.12 & 0.05 & 0.10 & 0.05 & 0.12 & 0.06 & 0.12 & 0.04 & 0.09\\
\hline
$n$& $\lambda$ &\multicolumn{10}{|l|}{Without Bias Correction}\\
\hline

\multirow{2}{*}{128} & 0.7 & 0.03 & 0.06 & 0.03 & 0.08 & 0.02 & 0.05 & 0.03 & 0.06 & 0.01 & 0.04\\

&0.9 & 0.04 & 0.07 & 0.04 & 0.08 & 0.03 & 0.07 & 0.02 & 0.05 & 0.03 & 0.06\\

\multirow{2}{*}{200} & 0.7 & 0.04 & 0.08 & 0.04 & 0.10 & 0.05 & 0.08 & 0.02 & 0.07 & 0.04 & 0.08\\

&0.9 & 0.05 & 0.10 & 0.03 & 0.07 & 0.04 & 0.09 & 0.04 & 0.09 & 0.02 & 0.06\\

\hline
\end{tabular}
\caption{Rejection frequencies under  $H_0$ ($\delta_a=0.0$), 
for time series stemming from the VAR model of Example 1 with dimension $p=20, 100, 200$ and   sample sizes  of $n=128$ and $ 200$ observations.} \label{example_H_0.table}
\end{table}

\begin{table}
\begin{tabular}{|l|l|rrrrrr||rrrr|} 
\hline
                     &            & \multicolumn{6}{c||}{$G_2$ }                                                 & \multicolumn{4}{c|}{$G_1$ }                         \\ 
\hline
                     &      $p$      & \multicolumn{2}{c}{20} & \multicolumn{2}{c}{100} & \multicolumn{2}{c||}{200} & \multicolumn{2}{c}{100} & \multicolumn{2}{c|}{200}  \\ 
\hline

                   & $\alpha$  & 0.05 & 0.10            & 0.05 & 0.10             & 0.05 & 0.10               & 0.05 & 0.10             & 0.05 & 0.10               \\ 
\hline
$n$ &  $\lambda$& \multicolumn{10}{|l|}{With Bias Correction}\\
\hline

\multirow{2}{*}{128} & 0.7 & 0.61 & 0.75 & 0.64 & 0.77 & 0.62 & 0.79 & 0.27 & 0.43 & 0.28 & 0.43\\

&0.9 & 0.61 & 0.74 & 0.62 & 0.79 & 0.61 & 0.78 & 0.25 & 0.41 & 0.18 & 0.32\\

\multirow{2}{*}{200} & 0.7 & 0.95 & 0.99 & 0.95 & 0.97 & 0.96 & 0.98 & 0.83 & 0.90 & 0.82 & 0.88\\

&0.9 & 0.95 & 0.99 & 0.96 & 0.98 & 0.96 & 0.98 & 0.82 & 0.89 & 0.76 & 0.88\\

\hline
$n$& $\lambda$ &\multicolumn{10}{|l|}{Without Bias Correction}\\
\hline

\multirow{2}{*}{128} & 0.7 & 0.51 & 0.68 & 0.52 & 0.68 & 0.50 & 0.68 & 0.15 & 0.25 & 0.12 & 0.24\\

&0.9 & 0.51 & 0.68 & 0.52 & 0.70 & 0.52 & 0.69 & 0.12 & 0.24 & 0.08 & 0.17\\

\multirow{2}{*}{200} & 0.7 & 0.93 & 0.97 & 0.94 & 0.97 & 0.94 & 0.96 & 0.73 & 0.85 & 0.73 & 0.82\\

&0.9 & 0.94 & 0.98 & 0.95 & 0.98 & 0.94 & 0.97 & 0.74 & 0.84 & 0.66 & 0.81\\

\hline
\end{tabular}
\caption{Rejection frequencies under  $H_1$ ($\delta_a=0.3$), for time series stemming from the VAR model of Example 1 with dimension $p=20, 100, 200$ and   sample sizes  of $n=128$ and $ 200$ observations.} \label{example_H_1.table}
\end{table}

\FloatBarrier

\subsection{A real-life data example}
In \cite{farmer2015stock} the question has been discussed whether the stock market affects the labor market. In the  aforementioned  paper,  the
 stock market is represented  by the   S\&P 500 index and the labor market by the unemployment rate, that is,  a bivariate time series  is 
  used to investigate the question of interest.   However, restricting the analysis to  a bivariate system  might be problematic since  it implies a loss of information due to the fact that   
   a variety of time series exist which  describe the activities in the two different macroeconomic sectors. The  corresponding  dependence structures might be much more complicated than those  captured 
    by  the particular bivariate   time series considered.  
 Therefore,  and  in order to get a more detailed and deeper inside  into the relations between the labor and the stock market, we consider the Federal Reserve Economic Data (FRED) of  St. Louis Fed's main, which is publicly available at \url{http://research.stlouisfed.org/fred2/}. This data set  contains  $31$ time series describing the activities  in the labor market and $5$ time series describing the stock market.  Furthermore, 
 to  take into account 
  the fact  that other macroeconomic variables may also exist which  
    simultaneously influence the labor and the stock market, all available time series  in the aforementioned  data set are considered,  which refer to a wide range of different economic activities.   The entire    data set  considered contains $p=124$ time series 
 and a complete description  of all  time series used  is given in  Appendix B of this paper.    
  The techniques proposed in  \cite{mccracken2016fred} have been  adopted  to transform this set of  time series to stationary.  Furthermore, to ensure  comparability of the data used in our analysis with the data set used by \cite{farmer2015stock}, the monthly observations  have been  aggregated to quarterly data where the  time period considered begins by  the fourth quarter  of  1979 and ends by the first quarter  of 2011.   
The  number of available observations is then equal to  $n=126$.     As already  mentioned, the question of interest is 
    whether  the  financial sector,  that is the stock market, influences the labor market. Following  \cite{farmer2015stock}, a vector autoregressive model of order $1$ has been used  and  the following   hypotheses
     have been considered:
\begin{enumerate}
\item[$H_0$:]  $ A_{j;r}=0$ for all  $(j,r) \in G$,  
\item[$H_1$:]  There exists $(j,r) \in G$ such that  $ A_{j;r}\not = 0$, 
\end{enumerate}
where $G$ is the set containing the indices corresponding to the two economic sectors, that is 
$G=\{(j,r) : j \in \emph{LABOR} \ \mbox{and} \ r \in \emph{STOCK}\}$. Here $\emph{STOCK}$ denotes  the set of indices referring to the time series of the stock market and $\emph{LABOR}$  to the time series of the labor market. Notice that $|G|=155$. To test the above pair of hypotheses, the test statistic described in Section \ref{sec.boottest} with $B=2,000$ bootstrap replications and bias corrected percentage points have been used. 

 The degree of  sparsity obtained  depends on the choice of the regularization  parameters. As mentioned in the previous section, we use the adaptive lasso with regularization parameter $\lambda_n$ and threshold parameter  $a_n=\lambda_n$ to estimate the coefficient matrix, while  the covariance of the innovations $\Sigmaeps$ is estimated  as  in (\ref{eq.sigma}), 
 with  the threshold parameter  $b_n$  chosen by cross-validation. For instance,    for $\lambda_n=0.25$ we obtain a  sparsity in the coefficient matrix $A$ of $0.56\%$,   for 
 $\lambda_n=0.1$  the  sparsity obtained is  $2.7\%$ and if $\lambda_n$ is chosen by BIC, it  leads 
 to a sparsity level of $1.43\%$. 
 The p-values  of the test   seem not to be largely affected  by the choice of this regularization parameter. This 
 is demonstrated in Figure \ref{fig_real_1} where the $p$-values of the bootstrap based  test proposed,  are shown for different values of the regularization parameter $\lambda_n$.
 

\begin{figure}[ht]
\includegraphics[width=\textwidth]{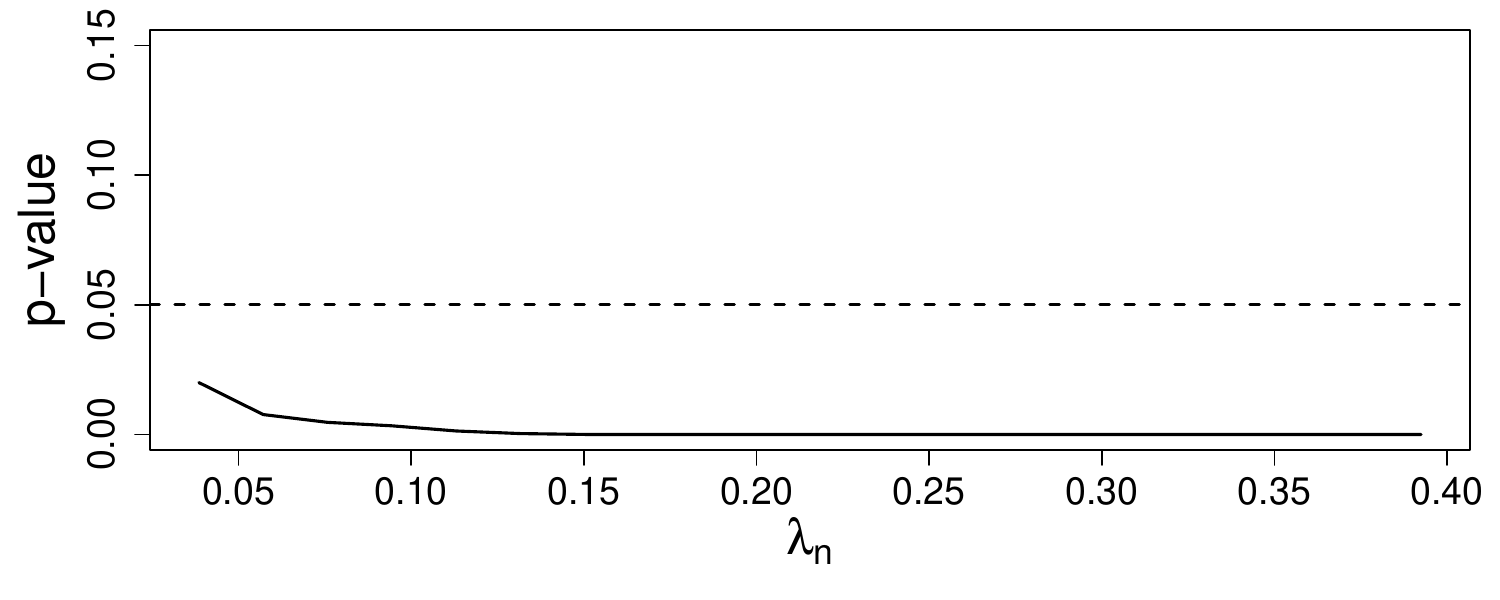}
\caption{P-values (vertical axis) of the test $T_n$  
described in Section \ref{sec.boottest} applied to the FRED data set  for  several  values of the regularization parameter $\lambda_n$ (horizontal axis).} \label{fig_real_1}
\end{figure}
 As it is  seen  from  this  figure, the null hypothesis of interest is rejected at the commonly used $ \alpha=0.05$ level, for all 
   values  of $\lambda_n$ considered. If $\lambda_n$ is chosen by BIC, the test gives a p-values of less than $0.0002$. Furthermore, our bootstrap procedure identifies the following  coefficients $A_{j;r}$, for $ j \in \emph{LABOR}$ and $r \in \emph{STOCK}$, as different from zero: 
  $A_{\emph{USGOOD},\emph{S.P.500}}=0.13$, $A_{\emph{MANEMP},\emph{S.P.500}}=0.18$, $A_{\emph{CLAIMSx},\emph{S.P.div.yield}}=0.32$, and $A_{\emph{NDMANEMP},\emph{S.P..indust}}=0.16$. 
Thus, our analysis 
leads  to the conclusion that the hypothesis that the stock market  does not influence the labor market should  be rejected. Notice that this    conclusion 
is also supported  by the findings reported in    \cite{phelps1999behind} and \cite{farmer2015stock}.

\section{Conclusions}

In this paper,  high-dimensional and  sparse vector autoregressive models have been considered.  We have first  
adopted the concept of de-biasing  to the high-dimensional  VAR($d$) time series  context and 
have considered  de-biased respectively  de-sparsified estimators of the autoregressive parameters. The 
 asymptotic distribution of the de-sparsified  estimators has  been derived under  general conditions. Furthermore, a bootstrap procedure has been proposed which 
 is asymptotically able  to generate pseudo time series that appropriately  imitate the dependence structure and  the sparsity properties of the underlying   high-dimensional  VAR($d$) process. Asymptotic validity of the 
 bootstrap procedure proposed for estimating the distribution of  de-sparsified estimators has been established. Furthermore, an appropriately modified version of the bootstrap  procedure 
  has been  used  for  testing hypotheses about groups of model parameters.  We allow for these groups of model parameters to increase to infinity with sample size. Validity of the bootstrap procedure 
  also has  been  established for this case. Finally, 
   we have demonstrated by means of  numerical investigations,
 the  good  finite sample behavior   of the bootstrap-based  inference procedure proposed and we have analyzed an  interesting  real-life data set.   

\section{Proofs}\label{sec.proofs}
Proofs which are note given in this section are 
presented in Appendix E of the Supplementary Material.
\begin{lem} \label{lem.physical}
Let $Y_{t;i}=G_i(\eps_t,\eps_{t_1},\dots,), i=1,\dots,\tilde p, t \in \Z,$ be some process generated causally by the i.i.d. processes $\{\eps_t\}$ for some function $G=(G_1,\dots,G_\pt)$. Furthermore, denote by $Y_{t;i}^{\prime(k)}=G_i(\eps_t,\eps_{t-1},\dots,\eps_{t-k+1},\eps_{t-k}^\prime,\eps_{t-k-1},\eps_{t-k-2},\dots)$ the process where $\eps_{t-k}$ is replaced by an i.i.d. copy of it.
Furthermore, define the physical dependence coefficients, see \cite{wu2005nonlinear,zhang2017gaussian,liu2013probability}, in the following way. Let $\delta_{k,q,i}=\|Y_{0;i}-Y_{0;i}^{\prime(k)}\|_q, k \geq 0$, $\Delta_{m,q;i}=\sum_{k=m}^\infty \delta_{k,q;i}$,$\|Y_{;i}\|_{q,\alpha}=\sup_{m\geq 0} (m+1)^\alpha \Delta_{m,q;i}$ $\Psi_{q,\alpha}=\max_{1\leq i\leq \pt} 
\|Y_{\cdot;i}\|_{q,\alpha},$
$\Upsilon_{q,\alpha}=(\sum_{i=1}^\pt (\sup_{m\geq 0} (m+1)^\alpha \Delta_{m,q;i})^q)^{1/q}$, 
$\omega_{k,q}=\| \max_{1\leq i \leq \pt} \| Y_{t;i}-Y_{t;i}^{\prime(k)}\|_q$, $\Omega_{m,q}=\sum_{k=m}^\infty \omega_{k,q},$ and 
$\| \|Y_\cdot\|_\infty \|_{q,\alpha}=\sup_{m\geq 0} (m+1)^\alpha \Omega_{k,q}$.
Furthermore, let $\nu_q=\sum_{j=1}^\infty (j^{q/2-1} \omega_{k,q})^{1/(q+1)}$.

If Assumption~\ref{ass1}\eqref{ass1.1},\eqref{ass1.2},\eqref{ass1.6} is satisfied, then the following assertions hold true:
\begin{enumerate}
    \item For the process $\{(e_r^\top (\Gammas)^{-1} W_t W_t^\top e_r-1)_{r=1,\dots,dp},t \in \Z\}$ we have $\nu_{q/2}<\infty$.
    \item Let ${\Gn}=\{(j_1,r_1,s_1),\dots,(j_\G,r_\G,s_\G)\}$ be some set of indices and let further    $Y_{t,i}:=e_{(s_i-1)p+r_i} (\Gammas)^{-1} W_t \eps_{t+1;j_i}/s.e.(r_i,j_i,s_i), i =1,\dots,\G$. Then for some $\alpha>1/2-1/q$ we have $\Psi_{q,\alpha}<\infty, \Upsilon_{q,\alpha}=O(\G^{1/q}), \| \|Y_\cdot\|_\infty \|_{q,\alpha}=O(\kpp \kp), \linebreak\nu_q<\infty$.
\end{enumerate}
\end{lem}

\begin{lem} \label{lem.concen}
Let $W_{t}=(X_{t}^\top,\dots,X_{t-d+1}^\top)^\top=\sum_{j=0}^\infty \A^j \U_{t-j}$ and suppose that   Assumption~\ref{ass1}\eqref{ass1.1},\eqref{ass1.2},\eqref{ass1.5}, \eqref{ass1.6} is true. Then, for some vector $v \in \R^{dp}, \|v\|_1=1$ we have 
$$
\max_{r=1,\dots,dp}| \frac{1}{n} e_r^\top \Big[\sum_{t=d}^{n-1} (\Gammas)^{-1} W_t W_t^\top -I_{dp}\Big] v| =O_P(\sqrt{\log(p) /n}).$$
\end{lem}

\begin{lem}\label{lem.max}
Let $\{Y_t, t \in \Z\}$ be a centered,  $p$-dimensional stationary process and let $\nu<\infty$ as given in Lemma~\ref{lem.physical}. If 
$p /((\log p)^{q/2}n^{q/2-1})=O(1)$, then\\
$
\max_{1 \leq j \leq p} | e_j^\top 1/\sqrt n \sum_{t=1}^n Y_t|=O_P( \sqrt{\log p}).
$
\end{lem}
\begin{proof}
The assertion follows  by  Nagaev's inequality for physical dependent processes, see Theorem 2 in \cite{liu2013probability}, and by using the same arguments as those  used in Lemma~\ref{lem.concen}.
\end{proof}

\begin{lem} \label{lem.est.gamma}
Let $\mathds{E}=(I_p, 0,\dots,0)^\top \in \R^{dp\times p}$
If Assumption~\ref{ass1}\eqref{ass1.1} to \eqref{ass1.4} holds true, then the  estimator of $\Gamma(h)$ given by 
\begin{align}
\label{eq.Ghat-thr}
    \hat \Gamma^{(st)}(h)&=\sum_{j=0}^\infty \hat {\mathds{A}}^{j+h}\mathds{E} \Sigmah \mathds{E}^\top (\hat{\mathds{A}}^\top)^j=\hat {\A}^{h}\operatorname{vec}^{-1} \Big((I_{(dp)^2}-\hat {\mathds{A}}\otimes\hat {\mathds{A}})^{-1} \operatorname{vec}(\mathds{E}\hat \Sigma_{\varepsilon})\Big), h\geq 0,
\end{align}
and $\hat \Gamma^{(st)}(h)=\big(\hat \Gamma^{(st)}(-h)\big)^\top$ for $h<0$ satisfies $\|\hat \Gamma^{(st)}(h)-\Gamma^{(st)}(h)\|_\infty=O_P(\cp \kp^4 \linebreak \sqrt{\gp/n}).$ Furthermore, if Assumption~\ref{ass1}\eqref{ass1.5} holds true, we also have 
$\|(\hat \Gamma^{(st)}(h))^{-1}-( \Gamma^{(st)}(h))^{-1}\|_\infty=O_P(\cp \kp^4 \kpp^2\sqrt{\gp/n}).$
\end{lem}

\begin{proof}[Proof of Lemma~\ref{lem.est.gamma}]
In order to simplify  notation we set $h=0$. Note that for $j \in \N$ we have 
$
\hat \A^j -\A^j=(\hat A-A+A)(\hat A^{j-1}-A^{j-1})+(A-\hat A)A^{j-1}$. Using this recursive formula, we obtain
$
\hat \A^j -\A^j=\sum_{s=0}^{j-1}[(\hat \A-\A)+\A]^s(\hat \A-\A)\A^{j-1-s}.
$
Since $\|\A-\hat \A\|_\infty=O_P(\kp \sqrt{\gp/n})$ by Assumption~\ref{ass1}\eqref{ass1.3}, we have
$\|\hat \A^j -\A^j\|_\infty=\|\sum_{s=0}^{j-1}[(\hat \A-\A)+\A]^s(\hat \A- \A)\A^{j-1-s}\|_\infty=O_P(j \lambda^{j-1} \kp^3 \sqrt{\gp/n})$. 

Since $\|{\A}\|_\infty=\|{\A}^\top\|_1$ we have by the same arguments $\|({\A}^\top)^j-(\hat {\A}^\top)^j\|_\infty=  O_P(j \lambda^{j-1} \linebreak  \kp^3 \sqrt{\gp/n})$. 
Furthermore, note that Assumption~\ref{ass1}\eqref{ass1.2} implies $\sum_{j=0}^\infty \|{\A}^j\|_g=O(\kp)$ and also $\sum_{j=0}^\infty \|\hat \A^j-{\A}^j\|_g=O(\kp^3 \sqrt{\gp/n})$ for $ g \in \{1,\infty\}$   We then have $\|\Gammas-\Gammah\|_\infty \leq \|\sum_{j=0}^\infty ({\A}^j-\hat {\A}^j) \mathds{E}\Sigmaeps\mathds{E}^\top ({\A}^\top)^j\|_\infty+\|\sum_{j=0}^\infty ({\hat {\A}}^j) \mathds{E}(\Sigmaeps-\hat \Sigma_\eps)\mathds{E}^\top ({\A}^\top)^j\|_\infty+\|\sum_{j=0}^\infty ({\hat {\A}}^j)\mathds{E} \hat\Sigma_\eps\mathds{E}^\top \! (({\A}^\top)^j-(\hat {\A}^\top)^j)\|_\infty=O_P(\kp^4 \cp \sqrt{\gp/n})$.

Using  $A^{-1}-B^{-1}=A^{-1}(B-A)B^{-1}$, we have
$\|(\Gammas)^{-1}-(\Gammah)^{-1}\|_\infty\leq\|(\Gammas)-{\Gammah}\|_\infty (\|(\Gammas)^{-1}-(\Gammah)^{-1}\|_\infty +\|(\Gammas)^{-1}\|_\infty) \|\Gamma^{-1}\|_\infty$. For $n$ large enough such that $\|(\Gammas)-{\Gammah}\|_\infty<\|(\Gammas)^{-1}\|_\infty^{-1}$, we have\\
$\|(\Gammas)^{-1}-(\Gammah)^{-1}\|_\infty\leq \|(\Gammas)-{\Gammah}\|_\infty {\|(\Gammas)^{-1}\|_\infty^2}\linebreak ({1-\|(\Gammas)^{-1}\|_\infty\|(\Gammas)-{\Gammah}\|_\infty})^{-1}$
\end{proof}

\begin{lem} \label{lem.Aint}
Let $\hat A^{(s)},s=1,\dots,d$ be the estimator of $A^{(s)},s=1,\dots,d$ given 
by
 \begin{equation*} 
   \hat A_{j;r}^{(s)}  = \hat A_{j;r}^{(s,1)} \ind{\big(| \hat A_{j;r}^{(s,1)} | \geq a_n,|\hat A_{j;r}^{(s,2)} | \geq a_n \big)}, j,r=1,\dots,p,s=1,\dots,d.
 \end{equation*}
If 
$\max_{i,j,s} |e_i^\top (\hat A^{(s,1)}-A)e_j|=O_P(r_n)$, $\sum_{s=1}^d \|\hat A^{(s,1)}-A\|_\infty=O_P(\kp r_n)$,
$\sum_{s=1}^d \|\hat A^{(s,2)}-A\|_1=O_P(\kp r_n)$ and $a_n=C r_n$ for some constant $C>0$, then
$\sum_{s=1}^d \|\hat A^{(s)}-A\|_\infty=O_P(\kp r_n)$,
$\sum_{s=1}^d \|\hat A^{(s)}-A\|_1=O_P(\kp r_n)$,
$\max_j \sum_{s=1}^d \|\hat A^{(s)}e_j\|_0=O_P(\kp)$ and 
$\max_j \sum_{s=1}^d \|e_j^\top \hat A^{(s)}\|_0=O_P(\kp)$.
\end{lem}

\begin{proof}
The proof uses ideas similar to those used   in the proof of Theorem 1 in \cite{bickel2008}. Let $A^{(s,thr)}=A^{(s)} =\big(A^{(s)}_{i;j}\ind (|A_{i;j}^{(s)}|\geq a_n)\big)_{i,j=1,\ldots,p}$, $ s=1,\dots,d$, be the thresholded version of $A^{(s)}$. Since $\kp=\max_{1\leq j \leq p} \big\{\sum_{s=1}^d\|e_j^\top A^{(s)}\|_0,\sum_{s=1}^d\| A^{(s)}e_j\|_0\big\}$, we have $\|A^{(s,thr)}-A^{(s)}\|_1\leq \kp a_n=O(\kp r_n)$ and $\|A^{(s,thr)}-A^{(s)}\|_\infty=O(\kp r_n)$, which 
gives the resulting rates for $ \|A^{(s,thr)}-A^{(s)}\|_1$ and for $\|A^{(s,thr)}-A^{(s)}\|_\infty $, when $ A^{(s)}$ is thresholded using the thresholding parameter $a_n$ which satisfies $a_n=O(r_n)$.
Let $J_{i,s}=\{j=1,\dots,p : A_{i,j}^{(s)})\not=0\}$ be the set of indices for which the entries of the $i$th row of $A^{(s)}$ are non-zero. We then have  
$
\sum_{s=1}^d \max_i \sum_{j=1}^p \ind\{|\hat A^{(s)}_{i;j}| \geq a_n\}  \leq \sum_{s=1}^d \max_i \sum_{j \in J_{i,s}} \ind\{|\hat A^{(s,1)}_{i;j}| \geq a_n\}+\sum_{j \not \in J_{i,s}} \ind\{|\hat A^{(s,1)}_{i;j}|\geq a_n\}
\leq \kp + \sum_{s=1}^d  \linebreak \max_i \sum_{j \not \in J_{i,s}} |\hat A^{(s,1)}_{i,j}-A_{i;j}^{(s)}|/a_n\leq \kp + \sum_{s=1}^d \|\hat A^{(s,1)}-A{(s)}\|_\infty/a_n=O_P(\kp).
$
Similarly by  the properties of $\hat A^{(s,2)}$, we have $\sum_{s=1}^d \max_j \sum_{i=1}^p \ind\{|\hat A^{(s)}_{i;j}|\geq a_n\}=O_P(\kp).$ Thus, $\hat A^{(s)}$ gives a row-wise and column-wise sparse estimator of $ A^{(s)}$. 
Moreover, 
$   \sum_{s=1}^d \linebreak \| \hat A^{(s)}-A^{(s,thr)}\|_\infty \leq \sum_{s=1}^d \max_i \sum_{j=1}^p |\hat A^{(s,1)}_{i;j}| \ind\{|\hat A^{(s,1)}_{i;j}|\geq a_n,|\hat A^{(s,2)}_{i;j}|\geq a_n,  |A^{(s)}_{i;j}|< a_n\} 
    +\sum_{s=1}^d \max_i \sum_{j=1}^p | A^{(s)}_{i;j}| \ind\{(|\hat A^{(s,1)}_{i;j}|< a_n \text{ or } |\hat A^{(s,2)}_{i;j}|< a_n),  |A^{(s)}_{i;j}|\geq a_n\}
    +\sum_{s=1}^d \max_i \linebreak \sum_{j=1}^p |\hat A^{(s,1)}_{i;j}-A_{i;j}| \ind\{|\hat A^{(s,1)}_{i;j}|\geq a_n,|\hat A^{(s,2)}_{i;j}|\geq a_n,  |A^{(s)}_{i;j}|\geq a_n\}
    =I+II+III.$
The indicator functions  ensure that each part consist only of $O_P(\kp)$ non-zero terms. Due to $\max_{i,j}\sum_{s=1}^d |e_i^\top (\hat A^{(s,1)}-A^{(s)})e_j|=O_P(r_n)$ each non-zero terms is of order $O_P(r_n)$ and  we obtain $\sum_{s=1}^d\|\hat A^{(s)}-A^{(s,thr)}\|_\infty=O_P(\kp r_n)$. In the same way, we obtain $\sum_{s=1}^d\| \hat A^{(s)}-A^{(s,thr)}\|_1=O_P(\kp r_n)$ and the assertion follows.
\end{proof}

\begin{lem} \label{lem.de.approx}
Let ${\Gn}=\{(j_1,r_1,s_1),\dots,(j_\G,r_\G,s_\G)\}$ be some set of indices and let   $Y_{t;(r_i,j_i,s_i)}=e_{(s_i-1)p+r_i} (\Gammas)^{-1} W_t \eps_{t+1;j_i}/s.e.(r_i,j_i,s_i), i =1,\dots,\G$. Set $h_1(x)=x, h_2(x)=-x$ and $ h_3(x)=|x|$. 
If Assumption~\ref{ass1}\eqref{ass1.1}-\eqref{ass1.6} hold true, then 
$$\max_{(j,r,s) \in {\Gn}} \frac{\sqrt{n}h_i(\hat A_{j;r}^{(s,de)}-A_{j;r}^{(s)})}{\widehat {s.e.} (j,r,s)}=\max_{(j,r,s) \in {\Gn}} h_i(1/\sqrt{n}\sum_{t=d}^{n-1}   Y_{t;(r,j,s)})+o_P(1),i=1,2,3.$$
\end{lem}
\begin{proof}
We proof the assertion for $h_3$, the other cases can be handled analogously. Denote by $\mathcal{I}=\{1,\dots,\G\}$ a set of indices corresponding to ${\Gn}$ and let $\tilde r=(s-1)p+r$ for $(j,r,s)\in {\Gn}$. We have  $(\hat A_{j;r}^{(s,de)}-A_{j;r}^{(s)}){\widehat {s.e.} (j,r,s)}=\sqrt{n}(\sum_{t=d}^{n-1} \hat Z_{t;\tilde r} W_{t;\tilde r})^{-1}
(\sum_{t=d}^{n-1} \hat Z_{t;\tilde r} \eps_{t+1;j}) \linebreak
+\sqrt{n}(\sum_{t=d}^{n-1} \hat Z_{t;\tilde r} W_{t;\tilde r})^{-1}
(\sum_{t=d}^{n-1} \hat Z_{t;\tilde r} W_{t;-(\tilde r)}(\Xi_{j;-(\tilde r)}-\hat \Xi_{j;-(\tilde r)}^{(re)})$

Consider first that $\hat Z_{t;\tilde r}=(e_{\tilde r} (\Gammah)^{-1} e_{\tilde r})^{-1} e_{\tilde r}^\top (\Gammah)^{-1} W_t=:\omega_{\tilde r} e_{\tilde r}^\top (\Gammah)^{-1} W_t$. The positive factor $\omega_{\tilde r}$ occurs in both terms in the numerator and denominator and can be omitted.  Let $DN=1/n(\sum_{t=d}^{n-1} e_{\tilde r}^\top (\Gammah)^{-1} W_t W_{t;\tilde r})$. Furthermore, observe that
$
DN=1 /n \sum_{t=d}^{n-1} e_{\tilde r}^\top (\Gammah)^{-1} W_t W_t ^\top e_{\tilde r}
=
1/n \Big( \sum_{t=d}^{n-1} e_{\tilde r}^\top (\Gammas)^{-1} W_t W_t ^\top e_{\tilde r} +  \sum_{t=d}^{n-1} e_{\tilde r}^\top ((\Gammah)^{-1}- (\Gammas)^{-1}) W_t W_t ^\top e_{\tilde r}\Big)=1+O_P(\kppp \sqrt{\gp/n}),
$
due to Lemma~\ref{lem.concen} and Lemma~\ref{lem.est.gamma}. Note further that $s.e.(j,r,s)^2
={\Sigma_{\eps,j;j}}(e_{(s-1)p+r}^\top \linebreak  (\Gammas)^{-1} e_{(s-1)p+r})$ and 
$
\widehat s.e.(j,r,s)^2
={\hat \Sigma_{\eps,j;j}}(e_{(s-1)p+r}^\top (\Gammah)^{-1} e_{(s-1)p+r})$. Thus, $\max_{(j,r,s) \in {\Gn}} \widehat s.e.(j,r,s)^2/s.e.(j,r,s)^2=1+O_P(\cp \sqrt{\gp/n}+\kppp  \linebreak \sqrt{\gp/n})$ by Assumption~\ref{ass1}\eqref{ass1.4} and Lemma~\ref{lem.est.gamma}.

Furthermore, we have
\begin{align*}
&\max_{(j,r,s)  \in G}  |\frac{1}{\sqrt{n}}(\sum_{t=d}^{n-1} e_{\tilde r} (\Gammah)^{-1} W_t  W_{t;-(\tilde r)}(\Xi_{j;-(\tilde r)}-\hat \Xi_{j;-(\tilde r)}^{(re)})| \\
&\leq 
\max_{(j,r,s) \in G}| \frac{1}{\sqrt{n}}(\sum_{t=d}^{n-1} e_{\tilde r} (\Gammas)^{-1} W_t  W_{t;-(\tilde r)}(\Xi_{j;-(\tilde r)}-\hat \Xi_{j;-(\tilde r)}^{(re)})| \\
&\qquad+
\max_{(j,r,s) \in G}| \frac{1}{\sqrt{n}}(\sum_{t=d}^{n-1} e_{\tilde r} ((\Gammah)^{-1}- (\Gammas)^{-1}) W_t  W_{t;-(\tilde r)}(\Xi_{j;-(\tilde r)}-\hat \Xi_{j;-(\tilde r)}^{(re)})| \\
&\leq \max_{(j,r,s) \in G}  \max_{s \not= \tilde r} | \frac{1}{\sqrt{n}}\sum_{t=d}^{n-1} e_{\tilde r} (\Gammas)^{-1} W_t  W_{t} e_s|
\| \Xi -\hat \Xi^{(re)}\|_\infty \\
& \qquad + \sqrt{n}\max_{s} | 1/n \sum_{t=d}^{n-1} W_{t;s} W_{t;s}| \| (\Gammah)^{-1}- (\Gammas)^{-1}\|_\infty \|\Xi -\hat \Xi^{(re)}\|_\infty \\
&=O_P( \kp \sqrt{\log(p) \gp}/n+ \kp \kppp \gp/\sqrt{n}),
\end{align*}
by Lemma~\ref{lem.concen},\ref{lem.est.gamma} and \ref{lem.max}.
Furthermore, we have 
$
1/\sqrt{n} \sum_{t=d}^{n-1}e_{\tilde r} (\Gammah)^{-1} W_t \eps_{t+1;j}= \linebreak
1/\sqrt{n} \sum_{t=d}^{n-1}e_{\tilde r} (\Gammas)^{-1} W_t \eps_{t+1;j}+
1/\sqrt{n} \sum_{t=d}^{n-1}e_{\tilde r} ((\Gammah)^{-1}-(\Gammas)^{-1}) W_t \eps_{t+1;j}.
$
 Applying Lemma~\ref{lem.physical} to  $\{W_t \eps_{t+1}\}$ leads to $\nu_q<\infty$. 
Lemma~\ref{lem.max} implies \linebreak
$ \max_{(j,r,s) \in {\Gn}} |1/\sqrt{n} \sum_{t=d}^{n-1}e_{\tilde r} ((\Gammah)^{-1}-(\Gammas)^{-1}) W_t \eps_{t+1;j}|
 \leq 
\|(\Gammah)^{-1}- \linebreak(\Gammas)^{-1}\|_\infty \max_{(j,r,s) \in {\Gn}} \max_{s_1,s_2} | 1/\sqrt{n} \sum_{t=d}^{n-1} e_{s_1} W_t \eps_{t+1;s_2} |
= O_P(\sqrt{\log(\G)} \linebreak \kppp \sqrt{\gp/n}).
$ This completes the calculations to show that the nuisance terms are asymptotically negligible and the assertion follows.
\end{proof}

{\it Proof of Theorem~\ref{thm.clt}:}
Applying Lemma~\ref{lem.de.approx} gives us 
$
\sqrt{n} (\hat A_{j;r}^{(s,de)}-A_{j;r}^{(s)}) =1/\sqrt{n} \linebreak \sum_{t=d}^{n-1}e_{(s-1)p+r}^\top (\Gammas)^{-1} W_t \eps_{t+1;j}+o_P(1)
=:1/\sqrt{n} \sum_{t=d}^{n-1} Y_t +o_P(1).$ Furthermore, we have by Lemma~\ref{lem.physical} that $\{Y_t\}$ fulfills the conditions of physical dependence, hence $\nu_q<\infty$. Note further that due to the i.i.d property of $\{\eps_t\}$ we have that $\{Y_t\}$ is an uncorrelated sequence which gives $\var 1/\sqrt{n} \sum_{t=d}^{n-1} Y_t=n/(n-d) \Sigma_{\eps;jj} (e_{(s-1)p+r}^\top (\Gammas)^{-1}e_{(s-1)p+r}^\top)$. Furthermore, Assumption~\ref{ass1}\eqref{ass1.6} ensures the existence of fourth order moments. Thus a Lyapounov condition can be verified for the sequence $\{n^{-1/2}\sum_{t=d+1}^n Y_t \}$,  establishing $1/\sqrt{n} \sum_{t=d+1}^n Y_t \overset{D}{\to} \mathcal{N}(0,\Sigma_{\eps;jj} (e_{(s-1)p+r}^\top (\Gammas)^{-1}e_{(s-1)p+r}^\top))$   via an extension  of the central limit theorem for functional dependent random variables,  Theorem 3 of  \cite{wu2011asymptotic},   to triangular arrays;  see also  Theorem 27.3 of 
\cite{billingsley1995probability}. Thus the assertion follows by Slutsky's Lemma.
\hfill $\Box$

\noindent{\it Proof of Lemma~ \ref{lem.dependence}:}
Let $\tilde r_1=(s_1-1)p+r_1$ and $\tilde r_2=(s_2-1)p+r_2$. Note further that due to the i.i.d property of $\{\eps_t\}$ we have that $\{W_{t_1} \eps_t\}$ is an uncorrelated sequence. Thus, we have
$
\cov(e_{\tilde r_1}^\top (\Gammas)^{-1} W_{t-1} \eps_{t;j_1},e_{\tilde r_2} (\Gammas)^{-1} {t-1} \eps_{t;j_2})
=(\Sigma_{\eps;jj} e_{\tilde r_1}^\top (\Gammas)^{-1} e_{\tilde r_2}$, the assertion follows  Lemma~\ref{lem.de.approx}. \hfill $\Box$

\noindent{\it Proof of Theorem~\ref{thm.bootstrap.lasso}:}
Denote by $\,^*$ the corresponding quantities in the bootstrap world, that is  $X_t^*, W_t^*=(X_{t-1}^*,\dots,X_{t-d}^*)$ and $\eps_t^*, t=1,\dots,n$.
We show  that,  as $n\to \infty$, 
 $\sqrt n (\hat A_{j;r}^{*(s,de)} - \hat A_{j;r}^{(s)}) \overset{D}{\to} \mathcal{N}\big(0,s.e.(j,r,s)^2\big)\text{ in probability},$
from which  the assertion  follows by the triangular inequality.
To establish the above weak convergence, 
 similar  ideas as  those used in the proof of Theorem \ref{thm.clt} can be applied. 
 
 Since $\hat A^{(s)},s=1,\dots,d$ and $\hat \Sigma_\eps$ give an sparse estimate, see Lemma~\ref{lem.Aint} and Lemma~\ref{lem.est.var.eps} we have that $\{X_t^*\}$ is a sparse VAR$(d)$ process with Gaussian i.i.d. innovations. Furthermore, due to Assumption~\ref{ass1}\eqref{ass1.2} and \eqref{ass1.3} this process is (asymptotically) causal and we have $X_t^*=(\hat A^{*1},\dots,\hat A^{*d}) (X_{t_1}^{*\top},\dots,X_{t-d}^{*\top})^\top+\eps_t^*=:\Xi^* W_t^* +\eps_t^*$. Thus, Lemma~\ref{lem.concen},\ref{lem.max}, and \ref{lem.physical} hold for this processes as well. A regularized estimator $\hat \Xi^{*(re)}=(\hat A^{*(s,re)},s=1,\dots,d)$ similarly as in the construction of $\{\hat A^{(de)}\}$ but based on the pseudo time series
 $X_1^*,\dots,X_n^*$ could be used to estimate $\Xi^*=(\hat A^{(s)},s=1,\dots,d)$ in the bootstrap world. By Lemma~\ref{lem.Aint} we obtain for $l \in \{1,\infty\}$ $\|\Xi^{*(re)}-\Xi^{*}\|_l=O_{P^*}(\kp \sqrt{\log(p)/n})$, where $\log(p)$ appears due to the Gaussianity of $\{\eps_t^*\}$. Assumption~\ref{ass1}\eqref{ass1.3} gives $\sum_{s=1}^d \|\hat A^{(s)}-A^{(s)}\|_l=\|\Xi^*-\Xi\|_l=O_P(\kp \sqrt{\gp/n})$. Hence, by the triangular inequality we obtain $\|\Xi^{*(re)}-\Xi\|_l=O_{P}(\kp \sqrt{\gp/n})$. Similarly, we obtain $\|\hat \Sigma_{\eps}^{*}-\Sigma_{\eps}\|_\infty=O_{P}(\cp \sqrt{\gp/n})$ by Assumption~\ref{ass1}\eqref{ass1.4} and Lemma~\ref{lem.est.var.eps}. This implies that the estimator $\Gammahast$ which is similarly constructed as $\Gammah$ in \eqref{eq.GammaHat} but based on $(\hat A^{*(s,re)},s=1,\dots,d)$ and $\hat \Sigma_{\eps}^{*}$ fulfills by Lemma~\ref{lem.est.gamma}
the rate $\|(\Gammahast)^{-1}-(\Gammas)^{-1}\|_\infty=O_{P}(\kppp \sqrt{\gp/n})$. This implies
$
\Sigma_{\eps;jj}^* (e_{(s-1)p+r}^\top \Gammasast(e_{(s-1)p+r}=\Sigma_{\eps;jj} (e_{(s-1)p+r}^\top \Gammas(e_{(s-1)p+r}+o_P(1).
$
 Since the Gaussian i.i.d. pseudo innovations $\eps^*_t$ fulfill the moment condition of Assumption~\ref{ass1}\eqref{ass1.6} and the bootstrap analogue of the estimators fulfill the required rates, we obtain by the same arguments as those used in the proof of Theorem~\ref{thm.clt} that, as $n\to \infty$, 
 $\sqrt n (\hat A_{j;r}^{*(s,de)} - \hat A_{j;r}^{(s)}) \overset{D}{\to} \mathcal{N}\big(0,s.e.(j,r,s)^2\big)$ in probability.
\hfill $\Box$

\noindent{\it Proof of Corollary~\ref{co.stud}:}
Follows by Lemma~\ref{lem.dependence} and the arguments used in the proof of Theorem~\ref{thm.bootstrap.lasso} which gives 
$\|\Sigma_{\eps}^*-\Sigma_{\eps}\|_\infty=o_P(1)$ and $\|\Gammahast-\Gammas\|_\infty=o_P(1)$.
\hfill $\Box$

\begin{proof}[Proof of Theorem~\ref{thm.bootstrap.test}]
By  the Cram\'er-Wold device and the 
same arguments as those used in the proof of Theorem~\ref{thm.clt}, it follows that  
the vector 
$\big(\sqrt{n}|\hat A_{j;r}^{(s,de)}|/\widehat {s.e.} (j,r,s): (j,r,s)\in G \big)^\top$
converges weakly  to a $|G|$-dimensional normal distribution with zero mean vector and covariance matrix having components those given  in  Lemma \ref{lem.dependence}. Then, following the same arguments as those used in the proof  of Theorem \ref{thm.bootstrap.lasso} and  by  Lemma \ref{lem.dependence2}, 
the same result can be established for the  limiting distribution of  the 
bootstrap analogue 
$\big(\sqrt{n}|\hat A_{j;r}^{*(s,de)}|/\widehat {s.e.}^* (j,r,s): (j,r,s)\in G \big)^\top$.
The assertion of the theorem  follows then by the continuous mapping theorem.
\end{proof}

\begin{lem}\label{lem.est.var.eps}
Let 
$\hat \eps_t= \tilde \eps_t- 1/{n-d}\sum_{t=d}^n \tilde \eps_t, \tilde \eps_t=X_t-\sum_{s=1}^d \hat A^{(s)} X_{t-j}=X_t-\Xi W_{t-1}$, $t=d+1,\dots,n$, be the centered estimated residuals of $\eps_1 \sim (0,\Sigma)$,
 $\max_j \|\Sigma e_j\|_0=\cp$ and $ \|\Sigma\|_1=O(\cp)$. Recall the definitions of $\widetilde{\Sigma}_\eps$ and $ \hat{\Sigma}_\eps$ given in (\ref{eq.sigma}). 
If Assumption 1\eqref{ass1.1} and 1\eqref{ass1.3} are satisfied, then 
\begin{align} 
\max_{i,j} |e_i^\top (\widetilde \Sigma_{\eps}- \Sigma_{\eps,n}) e_j|=O_P\Big(\sqrt{\gp/n}\kp^2(\sqrt{\gp/n}+\sqrt{\log(p)/n})\Big), \label{estimated.residuals}
\end{align}
where $ \Sigma_{\eps,n}=n^{-1}\sum_{t=1}^n \eps_t \eps_t^\top$.
If the $\eps_t$'s are   Gaussian and $b_n=C(\sqrt{\log(p)/n})$ for some constant $C>0$, then
\begin{align}\label{eq.est.sigma}
\|\hat \Sigma_{\eps}-\Sigma_\eps\|_1=O_P\Big(\cp \sqrt{\log(p)/n}+\kp^2 \log(p)/n\Big)
\end{align}
and
$
    \max_{1\leq j \leq p} \| \hat \Sigma_{\eps} e_j \|_0 = O_P(\cp).
$
\end{lem}
\begin{proof}
In order to simplify notation, let $1/(n-d)\sum_{t=d}^n \tilde \eps_t=0$. and  we assume that we have observations $X_{-d+1},\dots,X_n$. Note that
$    \max_{i,j} \|e_i^\top (\tilde \Sigma_{\eps}-\Sigma_{\eps,n}) e_j\| \leq \max_{i,j}| 1/n \linebreak \sum_{t=1}^n\eps_{t;i} W_{t-1}^\top (\Xi_i-\hat \Xi_j)| + \max_{i,j}| 1/n \sum_{t=1}^n\eps_{t;j} W_{t-1}^\top (\Xi_i-\hat \Xi_i)| +
    \max_{i,j} | (\Xi_i-\hat \Xi_i)^\top \linebreak(1/n \sum_{t=1}^n W_{t-1} W_{t-1}^\top) (\Xi_j-\hat \Xi_j) = I+II+III.
$
By Lemma~\ref{lem.max} and the convergence rate  of $\hat A$, we have   that $I\leq \max_i \max_k |1/n \sum_{t=1}^n \eps_{t;i} W_{t_1;k}| \|\Xi-\hat \Xi\|_\infty=O_P(\kp/n\sqrt{\log(p)} \linebreak \sqrt{\gp})$. The same arguments can be applied to $II$. Furthermore, we have by the convergence rate of   $\hat A$,  that $III=O_P(\kp^2 \gp/n)$. This establishes  (\ref{estimated.residuals}).
In the Gaussian case we have $\max_{i,j} | e_i^\top(\widetilde \Sigma_{\eps}-\Sigma_{\eps,n}) e_j|=O_P(\log(p)/n\kp^2)$,  and  from Theorem 1 in \cite{bickel2008} we get that  $\max_{i,j} | e_i^\top (\Sigma_{\eps,n}-\Sigmaeps) e_j|=O_P(\sqrt{\log (p)/n})$. Note further that $\max_i \sum_{j=1}^p \ind\{e_i^\top\Sigmaeps e_j\}=\cp$. To establish (\ref{eq.est.sigma}), we mainly follow the arguments given in the proof of Theorem 1 in \cite{bickel2008}. Since $b_n=C(\sqrt{\log(p)/n})$, we have $\|\Sigmaeps-\Sigma_{\eps}^{(thr)}\|_1=O_P(\cp \sqrt{\log(p)/n} )$, where $\Sigma_{\eps}^{(thr)}$ denotes the  true covariance matrix $\Sigma_\eps$  thresholded with threshold parameter  $b_n$.
Let $\hat \sigma_{ij}=e_i^\top \tilde \Sigma_\eps e_j$ and $\sigma_{i,j}=e_i^\top \Sigmaeps e_j$. Then, 
$    \|\Sigmahthr-\Sigma_{\eps}^{(thr)}\|_1 \leq \max_i \sum_{j=1}^p |\hat \sigma_{ij}| \ind\{|\hat \sigma_{ij}|\geq b_n, | \sigma_{ij}|<b_n \}+\max_i \sum_{j=1}^p | \sigma_{ij}| \ind\{|\hat \sigma_{ij}|< b_n, | \sigma_{ij}| \geq b_n \}
    +\max_i \sum_{j=1}^p |\hat \sigma_{ij} - \sigma_{ij}| \ind\{|\hat \sigma_{ij}|\geq b_n, | \sigma_{ij}| \geq b_n \}= IV+V+VI.
$
Equation (\ref{estimated.residuals}) and $\max_i \sum_{j=1}^p \ind\{e_i^\top\Sigmaeps e_j\}=\cp$ ensures that 
$V=O_P(\cp \sqrt{\log(p)/n})$ and $VI=O_P(\cp \sqrt{\log(p)/n})$. Furthermore, we have for some $c \in (0,1)$, that 
$IV=\max_i \sum_{j=1}^p |\hat \sigma_{ij}-\sigma_{ij}| \ind\{|\hat \sigma_{ij}|\geq b_n, | \sigma_{ij}|= 0 \}+O_P(\cp\sqrt{\log(p)/n})=\max_i \sum_{j=1}^p |\hat \sigma_{ij}-\sigma_{ij}| \ind\{|\hat \sigma_{ij}-E_{ij}|\geq c b_n, | \sigma_{ij}|= 0 \}+\max_i \sum_{j=1}^p |\hat \sigma_{ij}-\sigma_{ij}| \ind\{|E_{ij}|\geq (1-c) b_n, | \sigma_{ij}|= 0 \}+O_P(\sqrt{\log(p)/n}\kp)=VII+VIII+O_P(\cp \sqrt{\log(p)/n}),$ where 
$E_{ij}=e_i^\top(\widetilde \Sigma_{\eps}-\Sigma_{\eps,n}) e_j$. Note that $\hat \sigma_{ij}-E_{ij}=e_i^\top \Sigma_{\eps,n} e_j$ and thus, $VII$ can be bounded analogously as part $IV$ in the proof of Theorem 1 in \cite{bickel2008}. Since $\max_{i,j}|E_{i,j}|=O_P(\cp \log(p)/n )=O_P(b_n^2 \cp)$, we have
$P(\max_{i,j} | E_{i,j}| \geq (1-c) b_n)=o(1)$. Consequently, $VII=o(\cp \sqrt{\log(p)/n})$ and equation (\ref{eq.est.sigma}) follows. Let $J_i=\{j=1,\dots,p : \Sigma_{\eps,i,j}\not=0\}$ be the set of indices for which the entries of the $i$th row of $\Sigmaeps$ are non-zero. Then, for $0<b_n=C(\sqrt{\log(p)/n})$, we have that    $\max_i \|{\hat \Sigma_{\hat \eps}} e_i\|_0= \max_i \sum_{j \in J_i} \ind(|{\hat \Sigma_{\hat \eps,i,j}}{}|>b_n)+\sum_{j \not \in J_i} \ind(|{\hat \Sigma_{\hat \eps,i,j}}|>b_n)\leq \cp + \max_i \sum_{j \not \in J_i} \ind(|{\hat \Sigma_{\hat \eps,i,j}}-\Sigma_{\eps,i,j}|>b_n)\leq \cp + b_n^{-1} \|{\hat \Sigma_{\hat \eps}}-\Sigmaeps\|_1=O_P(\cp)$.
\end{proof}

\begin{proof}[Proof of Theorem \ref{thm.test.grow}]
Let $\G=|\Gn|$ and $\mathcal{I}=\{1,\dots,\G\}$ be a set of indices corresponding to $\Gn$. 
We show this by using first the triangular inequality, thus
\begin{align*}
\sup_{c \in \R}& | P^\ast( \max_{(j,r,s) \in \Gn} | \frac{\sqrt{n}|\hat A_{j;r}^{*(s,de)}|}{\widehat {s.e.}^* (j,r,s)}| \leq c )-
P( \max_{(j,r,s) \in \Gn} | \frac{\sqrt{n}|\hat A_{j;r}^{(s,de)}|}{\widehat {s.e.} (j,r,s)}| \leq c )| \\
\leq&
\sup_{c \in \R} | P(\max_{(j,r,s) \in {\Gn}} \sqrt{n}|\hat A_{j;r}^{(s,de)}|/\widehat {s.e.} (j,r,s)) - P( |\max_{i \in \mathcal{I}}  e_i^\top D_0^{-1} Z e_i| \geq c)| \\
&+
\sup_{c \in \R} | P^\ast( \max_{(j,r,s) \in \Gn} | \frac{\sqrt{n}|\hat A_{j;r}^{*(s,de)}|}{\widehat {s.e.}^* (j,r,s)}| \leq c ) - P( |\max_{i \in \mathcal{I}}  e_i^\top D_0^{-1} Z e_i| \geq c)|,
\end{align*}
where $Z$ is some appropriate Gaussian process and $D_0\in \R^{\G \times \G}$ specified later on. If we can show that the first term is $o(1)$ and the second $o_P(1)$ the assertion follows. We begin with the first term. 
Note that Lemma~\ref{lem.de.approx} gives us 
$
\max_{(j,r,s) \in {\Gn}} (\sqrt{n}|(\hat A_{j;r}^{(s,de)}-A_{j;r}^{(s)}))/\widehat {s.e.} (j,r,s)|=\max_{(j,r,s) \in {\Gn}} |(1/\sqrt{n}\sum_{t=d}^{n-1}   Y_{t;(r,j,s)})|+o_P(1),
$
where $Y_{t;(r,j,s)}= \linebreak e_{(s-1)p+r}^\top  (\Gammas)^{-1} W_t  \eps_{t+1;j}/s.e.(r,j,s)$.
Let $D_0=\operatorname{diag}(s.e(r_i,j_i,s_i), i \in \mathcal{I})$. We have $\var(D_0 1/\sqrt{n} \sum_{t=1}^n (Y_{t;(r_i,j_i,s_i)}, i \in \mathcal{I})^\top)=\Sigma_{T}=(\sigma_{T;i_1,i_2}, i_1,i_2 \in \mathcal{I}),$ where
$\sigma_{T;i_1,i_2}=\Sigma_{\eps,j_{i_1} j_{i_2}} e_{(s_{i_1}-1)p+r_{i_1}}^\top (\Gammas)^{-1}  e_{(s_{i_2}-1)p+r_{i_2}}$. To see this, note that $\{Y_t\}$ is an uncorrelated sequence and we have $1/n\cov(s.e(r_1,j_1,s_1) \sum_{t=1}^n  Y_{t;(r_1,j_1,s_1)},
s.e(r_2,j_2,s_2) \sum_{t=1}^n \linebreak Y_{t;(r_2,j_2,s_2)})=  1/n \sum_{t=1}^n \cov(e_{(s_1-1)p+r_1} (\Gammas)^{-1} W_t  \eps_{t+1;j_1},e_{(s_2-1)p+r_2} (\Gammas)^{-1} W_t \linebreak \eps_{t+1;j_2})=\sigma_{T;i_1,i_2}$.

Thus, for the process $Y:=\{(Y_{t,(r,j,s)})_{(r,j,s) \in \Gn}, t \in \Z\}$ we obtain by Lemma~\ref{lem.physical} that we have for some $\alpha>1/2$ $\Psi_{q,\alpha}=O(1), \Upsilon_{q,\alpha}=O(\min(\G^{1/q},\kp \kpp \log(\G)^{3/2}))$, and $\| \|Y_\cdot\|_\infty \|_{q,\alpha}=O(1)$. For $q$ large enough we have $O(\min(\G^{1/q},\kp \kpp \log(\G)^{3/2}))=O(\G^{1/q})$. Thus, remark 2 in \cite{zhang2017gaussian} can be applied which gives by Theorem 3.2 in \cite{zhang2017gaussian} an Gaussian approximation if $\G n^{1-q/2}(\log(\G n))^{3q/2}=o(1)$. This holds by Assumption~\ref{ass1}\eqref{ass1.6} and the limit on $b$. Applying now Theorem 3.2 in \cite{zhang2017gaussian} and using that the other terms are asymptotically negligible we obtain
$
\sup_{c \in \R} | P(\max_{(j,r,s) \in {\Gn}} \sqrt{n}|\hat A_{j;r}^{(s,de)}|/\widehat {s.e.} (j,r,s)) - P( |\max_{i \in \mathcal{I}}  e_i^\top D_0^{-1} Z e_i| \geq c)|=o(1),
$
where $Z=(Z_i, i \in \mathcal{I})\sim \mathcal{N}(0,\Sigma_T)$. 
In the following we show that
$
\sup_{c \in \R} | P^\ast( \max_{(j,r,s) \in G} \linebreak | (\sqrt{n}|\hat A_{j;r}^{*(s,de)}|)/(\widehat {s.e.}^* (j,r,s))| \leq c ) - P( |\max_{i \in \mathcal{I}}  e_i^\top D_0^{-1} Z e_i| \geq c)|=o_P(1).
$
Following the arguments used in the proof of Theorem~\ref{thm.bootstrap.lasso} we obtain $\|\Xi^{*(re)}-\Xi\|_l=O_{P}(\kp \sqrt{\gp/n})$, $\|\Sigma_{\eps}^{*}-\Sigma_{\eps}\|_\infty=O_{P}(\cp \sqrt{\gp/n})$ and $\|(\Gammahast)^{-1}-(\Gammas)^{-1}\|_\infty= \linebreak O_{P}(\kppp \sqrt{\gp/n})$, where the bootstrap analogue of the estimators is denoted by $*$. Hence, Lemma~\ref{lem.de.approx} implies
$\max_{(j,r,s) \in {\Gn}} (\sqrt{n}|\hat A_{j;r}^{*(s,de)}|)/(\widehat {s.e.}^* (j,r,s))=\max_{(j,r,s) \in {\Gn}}  1/\sqrt{n}\sum_{t=d}^{n-1} Y_{t;(r,j,s)}^*+o_P(1)$, where 
$Y_{t;(r,j,s)}^*=e_{(s-1)p+r}^\top (\Gammas)^{-1} W_t^* \linebreak  \eps_{t+1;j}^*/s.e.(r,j,s)$. Furthermore, we have
\begin{align*}
    \frac{1}{n}\cov^*&( \sum_{t=d}^{n-1} e_{(s_1-1)p+r_1}^\top (\Gammas)^{-1} W_t^* \eps_{t+1;j_1}^*, \sum_{t=d}^{n-1} e_{(s_2-1)p+r_2}^\top (\Gammas)^{-1} W_t^* \eps_{t+1;j_2}^*)\\
    &= (n-d)/n e_{(s_1-1)p+r_1}^\top (\Gammas)^{-1} \Gammah (\Gammas)^{-1} e_{(s_2-1)p+r_2}\hat \Sigma_{\eps;j_1j_2}\\
    &= e_{(s_1-1)p+r_1}^\top (\Gammas)^{-1} e_{(s_2-1)p+r_2}\Sigma_{\eps;j_1j_2}+o_P(1)=\sigma_{T;i_1,i_2}+o_P(1), 
\end{align*}
if $(j_1,r_1,s_1)=(j_{i_1},r_{i_1},s_{i_1})$ and $(j_2,r_2,s_2)=(j_{i_2},r_{i_2},s_{i_2})$. Thus, $\{Y_{t;(r,j,s)}^*, \linebreak (r,j,s) \in {\Gn}, t \in \Z \}$ possesses asymptotically the same autocovariance structure as $\{Y_{t;(r,j,s)}, (r,j,s) \in {\Gn}, t \in \Z \}$. This implies that Theorem 3.2 in \cite{zhang2017gaussian} can be applied in the same way as above to $\max_{(j,r,s) \in {\Gn}}  1/\sqrt{n}\sum_{t=d}^{n-1} Y_{t;(r,j,s)}^*$.  we obtain 
$
\sup_{c \in \R} | P^\ast( \max_{(j,r,s) \in G} | (\sqrt{n}|\hat A_{j;r}^{*(s,de)}|)/\widehat {s.e.}^* (j,r,s)| \leq c ) - P( |\max_{i \in \mathcal{I}}  e_i^\top D_0^{-1} Z e_i| \geq c)|=o_P(1).
$
\end{proof}

\begin{lem} \label{lem.finite.moment}
Let $X_t=\sum_{j=0}^\infty B_j \eps_{t-j},t \in \Z$, where the $ \eps_t$'s are i.i.d., $E\eps_1 \eps_1^\top =\Sigma$  $\sum_{j=0}^\infty \|B_j\|_2\leq C_1<\infty $ and $(E (v^\top \eps_0)^q)^{1/q}\leq C_2<\infty$ for all $\|v\|_2<\infty, C_1,C_2<\infty$. Then, $(E\|v^\top X_1\|^q)^{1/q}<\infty$ for $\|v\|_2<\infty$.
\end{lem}

\begin{lem}\label{lem.dagger}
Let $C$ be some positive definite matrix and $\beta_r^\dagger=e_r-I_{p;-r}\beta_{r;-r}$, where $\beta_{r;-r}=(I_{p;-r}^\top C I_{p;-r})^{-1}(I_{p;-r}^\top C e_r)$. Then, 
$
\beta_r^\dagger=(e_r^\top C^{-1} e_r)^{-1} C^{-1} e_r.
$
\end{lem}

{\bf Acknowledgments.}  The research of the first author was supported by the Research Center (SFB) 884 ``Political Economy of Reforms''(Project B6), funded by the German Research Foundation
(DFG). Furthermore, the authors acknowledge support by the state of Baden-W{\"u}rttemberg through bwHPC.

\bibliographystyle{apalike}
\bibliography{bib}

\begin{thebibliography}{}

\bibitem[Basu and Michailidis, 2015]{basu2015}
Basu, S. and Michailidis, G. (2015).
\newblock Regularized estimation in sparse high-dimensional time series models.
\newblock {\em The {A}nnals of {S}tatistics}, 43(4):1535--1567.

\bibitem[Bickel and Freedman, 1981]{bickel1981}
Bickel, P.~J. and Freedman, D.~A. (1981).
\newblock Some asymptotic theory for the bootstrap.
\newblock {\em Ann. Statist.}, 9(6):1196--1217.

\bibitem[Bickel and Levina, 2008]{bickel2008}
Bickel, P.~J. and Levina, E. (2008).
\newblock Covariance regularization by thresholding.
\newblock {\em The {A}nnals of {S}tatistics}, 36(6):2577--2604.

\bibitem[Billingsley, 1995]{billingsley1995probability}
Billingsley, P. (1995).
\newblock Probability and measure. wiley series in probability and mathematical
  statistics.

\bibitem[Cai and Liu, 2011]{cai2011adaptive}
Cai, T. and Liu, W. (2011).
\newblock Adaptive thresholding for sparse covariance matrix estimation.
\newblock {\em Journal of the American Statistical Association},
  106(494):672--684.

\bibitem[Cai et~al., 2011]{cai2011constrained}
Cai, T., Liu, W., and Luo, X. (2011).
\newblock A constrained l1 minimization approach to sparse precision matrix
  estimation.
\newblock {\em Journal of the American Statistical Association},
  106(494):594--607.

\bibitem[Cai et~al., 2016]{cai2016estimating}
Cai, T.~T., Liu, W., Zhou, H.~H., et~al. (2016).
\newblock Estimating sparse precision matrix: Optimal rates of convergence and
  adaptive estimation.
\newblock {\em The Annals of Statistics}, 44(2):455--488.

\bibitem[Chatterjee and Lahiri, 2010]{chatterjee2010asymptotic}
Chatterjee, A. and Lahiri, S. (2010).
\newblock Asymptotic properties of the residual bootstrap for lasso estimators.
\newblock {\em Proceedings of the American Mathematical Society},
  138(12):4497--4509.

\bibitem[Chatterjee and Lahiri, 2011]{chatterjee2011bootstrapping}
Chatterjee, A. and Lahiri, S.~N. (2011).
\newblock Bootstrapping lasso estimators.
\newblock {\em Journal of the American Statistical Association},
  106(494):608--625.

\bibitem[Chaudhry et~al., 2017]{chaudhry2017uncertainty}
Chaudhry, A., Xu, P., and Gu, Q. (2017).
\newblock Uncertainty assessment and false discovery rate control in
  high-dimensional granger causal inference.
\newblock In {\em International Conference on Machine Learning}, pages
  684--693.

\bibitem[Chen et~al., 2013]{chen2013covariance}
Chen, X., Xu, M., Wu, W.~B., et~al. (2013).
\newblock Covariance and precision matrix estimation for high-dimensional time
  series.
\newblock {\em The Annals of Statistics}, 41(6):2994--3021.

\bibitem[Chernozhukov et~al., 2018]{chernozhukov2018lasso}
Chernozhukov, V., H{\"a}rdle, W.~K., Huang, C., and Wang, W. (2018).
\newblock Lasso-driven inference in time and space.
\newblock {\em arXiv preprint arXiv:1806.05081}.

\bibitem[Davis et~al., 2016]{davis2016sparse}
Davis, R.~A., Zang, P., and Zheng, T. (2016).
\newblock Sparse vector autoregressive modeling.
\newblock {\em Journal of Computational and Graphical Statistics},
  25(4):1077--1096.

\bibitem[Dezeure et~al., 2017]{dezeure2017}
Dezeure, R., B{\"u}hlmann, P., and Zhang, C.-H. (2017).
\newblock High-dimensional simultaneous inference with the bootstrap.
\newblock {\em Test}, 26(4):685--719.

\bibitem[Efron, 1981]{efron1981nonparametric}
Efron, B. (1981).
\newblock Nonparametric standard errors and confidence intervals.
\newblock {\em canadian Journal of Statistics}, 9(2):139--158.

\bibitem[Efron and Tibshirani, 1993]{efron1994introduction}
Efron, B. and Tibshirani, R.~J. (1993).
\newblock {\em An introduction to the bootstrap}.
\newblock CRC press.

\bibitem[El~Karoui, 2008]{el2008operator}
El~Karoui, N. (2008).
\newblock Operator norm consistent estimation of large-dimensional sparse
  covariance matrices.
\newblock {\em The Annals of Statistics}, 36(6):2717--2756.

\bibitem[Farmer, 2015]{farmer2015stock}
Farmer, R.~E. (2015).
\newblock The stock market crash really did cause the great recession.
\newblock {\em Oxford Bulletin of Economics and Statistics}, 77(5):617--633.

\bibitem[Garcia et~al., 2017]{garcia2017real}
Garcia, M.~G., Medeiros, M.~C., and Vasconcelos, G.~F. (2017).
\newblock Real-time inflation forecasting with high-dimensional models: The
  case of brazil.
\newblock {\em International Journal of Forecasting}, 33(3):679--693.

\bibitem[Giacomini et~al., 2013]{giacomini2013warp}
Giacomini, R., Politis, D.~N., and White, H. (2013).
\newblock A warp-speed method for conducting monte carlo experiments involving
  bootstrap estimators.
\newblock {\em Econometric theory}, 29(3):567--589.

\bibitem[Han et~al., 2015]{han2015direct}
Han, F., Lu, H., and Liu, H. (2015).
\newblock A direct estimation of high dimensional stationary vector
  autoregressions.
\newblock {\em The Journal of Machine Learning Research}, 16(1):3115--3150.

\bibitem[Kilian and L{\"u}tkepohl, 2017]{kilian2017structural}
Kilian, L. and L{\"u}tkepohl, H. (2017).
\newblock {\em Structural vector autoregressive analysis}.
\newblock Cambridge University Press.

\bibitem[Knight and Fu, 2000]{knight2000asymptotics}
Knight, K. and Fu, W. (2000).
\newblock Asymptotics for lasso-type estimators.
\newblock {\em The {A}nnals of {S}tatistics}, pages 1356--1378.

\bibitem[Kock and Callot, 2015]{kock2015oracle}
Kock, A.~B. and Callot, L. (2015).
\newblock Oracle inequalities for high dimensional vector autoregressions.
\newblock {\em Journal of Econometrics}, 186(2):325--344.

\bibitem[Ledoit and Wolf, 2012]{ledoit2012nonlinear}
Ledoit, O. and Wolf, M. (2012).
\newblock Nonlinear shrinkage estimation of large-dimensional covariance
  matrices.
\newblock {\em The Annals of Statistics}, 40(2):1024--1060.

\bibitem[Lin and Michailidis, 2017]{lin2017regularized}
Lin, J. and Michailidis, G. (2017).
\newblock Regularized estimation and testing for high-dimensional multi-block
  vector-autoregressive models.
\newblock {\em Journal of Machine Learning Research}, 18(1):4188--4236.

\bibitem[Liu et~al., 2013]{liu2013probability}
Liu, W., Xiao, H., and Wu, W.~B. (2013).
\newblock Probability and moment inequalities under dependence.
\newblock {\em Statistica sinica}, pages 1257--1272.

\bibitem[L{\"u}tkepohl, 2007]{luetkepohl2007new}
L{\"u}tkepohl, H. (2007).
\newblock {\em New Introduction to Multiple Time Series Analysis}.
\newblock Springer Berlin Heidelberg.

\bibitem[McCracken and Ng, 2016]{mccracken2016fred}
McCracken, M.~W. and Ng, S. (2016).
\newblock Fred-md: A monthly database for macroeconomic research.
\newblock {\em Journal of Business \& Economic Statistics}, 34(4):574--589.

\bibitem[Neykov et~al., 2018]{neykov2018unified}
Neykov, M., Ning, Y., Liu, J.~S., Liu, H., et~al. (2018).
\newblock A unified theory of confidence regions and testing for
  high-dimensional estimating equations.
\newblock {\em Statistical Science}, 33(3):427--443.

\bibitem[Ning et~al., 2017]{ning2017general}
Ning, Y., Liu, H., et~al. (2017).
\newblock A general theory of hypothesis tests and confidence regions for
  sparse high dimensional models.
\newblock {\em The Annals of Statistics}, 45(1):158--195.

\bibitem[Phelps, 1999]{phelps1999behind}
Phelps, E.~S. (1999).
\newblock Behind this structural boom: the role of asset valuations.
\newblock {\em American Economic Review}, 89(2):63--68.

\bibitem[Pourahmadi, 2013]{pourahmadi2013high}
Pourahmadi, M. (2013).
\newblock {\em High-dimensional covariance estimation: with high-dimensional
  data}, volume 882.
\newblock John Wiley \& Sons.

\bibitem[{R Core Team}, 2018]{R}
{R Core Team} (2018).
\newblock {\em R: A Language and Environment for Statistical Computing}.
\newblock R Foundation for Statistical Computing, Vienna, Austria.

\bibitem[Reinsel, 2003]{reinsel2003elements}
Reinsel, G.~C. (2003).
\newblock {\em Elements of multivariate time series analysis}.
\newblock Springer Science \& Business Media.

\bibitem[Simon et~al., 2011]{glmnet}
Simon, N., Friedman, J., Hastie, T., and Tibshirani, R. (2011).
\newblock Regularization paths for cox's proportional hazards model via
  coordinate descent.
\newblock {\em Journal of Statistical Software}, 39(5):1--13.

\bibitem[Song and Bickel, 2011]{song2011large}
Song, S. and Bickel, P.~J. (2011).
\newblock Large vector auto regressions.
\newblock {\em {P}reprint arXiv:1106.3915}.

\bibitem[Tsay, 2013]{tsay2013multivariate}
Tsay, R.~S. (2013).
\newblock {\em Multivariate time series analysis: with R and financial
  applications}.
\newblock John Wiley \& Sons.

\bibitem[{v}an~de Geer et~al., 2014]{deGeer2014}
{v}an~de Geer, S., B{\"u}hlmann, P., Ritov, Y., and Dezeure, R. (2014).
\newblock On asymptotically optimal confidence regions and tests for
  high-dimensional models.
\newblock {\em The {A}nnals of {S}tatistics}, 42(3):1166--1202.

\bibitem[Wu, 2005]{wu2005nonlinear}
Wu, W.~B. (2005).
\newblock Nonlinear system theory: Another look at dependence.
\newblock {\em Proceedings of the National Academy of Sciences},
  102(40):14150--14154.

\bibitem[Wu, 2011]{wu2011asymptotic}
Wu, W.~B. (2011).
\newblock Asymptotic theory for stationary processes.
\newblock {\em Statistics and its Interface}, 4(2):207--226.

\bibitem[Yan and Lin, 2016]{cov_cross}
Yan, Y. and Lin, F. (2016).
\newblock {\em FinCovRegularization: Covariance Matrix Estimation and
  Regularization for Finance}.
\newblock R package version 1.1.0.

\bibitem[Zhang and Zhang, 2014]{zhang2014}
Zhang, C.-H. and Zhang, S.~S. (2014).
\newblock Confidence intervals for low dimensional parameters in high
  dimensional linear models.
\newblock {\em Journal of the Royal Statistical Society: Series B (Statistical
  Methodology)}, 76(1):217--242.

\bibitem[Zhang and Wu, 2017]{zhang2017gaussian}
Zhang, D. and Wu, W.~B. (2017).
\newblock Gaussian approximation for high dimensional time series.
\newblock {\em The Annals of Statistics}, 45(5):1895--1919.

\bibitem[Zhang et~al., 2018]{zhang2018gaussian}
Zhang, X., Cheng, G., et~al. (2018).
\newblock Gaussian approximation for high dimensional vector under physical
  dependence.
\newblock {\em Bernoulli}, 24(4A):2640--2675.

\bibitem[Zheng and Raskutti, 2018]{zheng2018testing}
Zheng, L. and Raskutti, G. (2018).
\newblock Testing for high-dimensional network paramaters in auto-regressive
  models.
\newblock {\em arXiv preprint arXiv:1812.03659}.

\end{thebibliography}

\clearpage
\begin{center}
{\large\bf Appendix A: Stacked VAR($d$) processes}
\end{center}
In this Appendix we discuss some properties of stacked VAR($d$) processes.
The VAR$(d)$ model, $X_t=\sum_{s=1}^d A^{(s)} X_{t-j} + \eps_t$,  can be written as a VAR$(1)$ model $\mathds{X}_t= \mathds{A} \mathds{X}_{t-1} + \mathds{U}_t$,  where  
$\mathds{X}_t = (X_t^\top,X_{t-1}^\top,\dots,X_{t-d}^\top)^\top,$ 

$$\mathds{A}= \begin{pmatrix}
A^{(1)}& A^{(2)} & \dots & A^{(d)}\\
I_p & 0 & \dots & 0\\
0 & \ddots & \ddots & \vdots\\
0 & \dots & I_p & 0 
\end{pmatrix}\in \R^{dp\times dp},
\mathds{U}_t=
\begin{pmatrix}
\eps_t \\
0\\
\vdots\\
0
\end{pmatrix}
\text{ and } 
\mathds{E}=
\begin{pmatrix}
I_p \\
0\\
\vdots\\
0
\end{pmatrix}\in \R^{dp\times p}.
$$
 $\{\mathds{X}_t\}$ is stable if $\{X_t\}$ is stable; see Section 2.1 in \cite{luetkepohl2007new}. Hence, in this case we also have the representation $\mathds{X}_t=\sum_{j=0}^\infty \mathds{A}^j \mathds{U}_{t-j}$ which gives  $\Gammas=\var(\mathds{X}_1)=\var((X_1,\dots,X_d)^\top)=\sum_{j=0}^\infty \mathds{A}^j \Sigma_{\mathds{U}}(\mathds{A}^\top)^j$, where $ \Sigma_{\mathds{U}}=\var(\mathds{U_1)}$. Note that a VAR(1) process with coefficient matrix $A$ is stable if $\rho(A)<1$, where $\rho(A)$ denotes the maximal absolute eigenvalue of $A$, which is equivalent to $\det(I_p-A(z))\not = 0$ for all $|z|\leq 1$. 
A VAR$(d)$ process is stable if $\det(I-\sum_{s=1}^d A^{(s)} z^s)\neq 0$ for all $|z|\leq 1$, see Section 2.1 in \cite{luetkepohl2007new}. Furthermore, this still holds for the stacked VAR$(1)$ process, that is, this process  is stable if $\det(I_p-\mathds{A}(z))\not = 0$ for all $|z|\leq 1$. However, for such a stacked coefficient matrix we may have that $\| \mathds{A} \|_2 \not \leq 1$;  see Lemma E.2 in \cite{basu2015}.
Furthermore, a stable and causal VAR$(d)$ process, possesses the representation $X_t=\sum_{j=0}^\infty B_j \eps_{t-j}$, $B_0=I$. For a VAR$(1)$ we have $B_j=A^j$.
Note that for a VAR$(d)$ process we have  $X_t=\sum_{j=0}^\infty B_j \eps_{t-j}= \sum_{j=0}^\infty  \mathds{E}^\top \mathds{A}^j \mathds{E} \eps_{t-j}$. Thus, $\|B_j\|=\|\mathds{E}^\top \mathds{A}^j \mathds{E}\| \leq \|\mathds{E}^\top\|\|\mathds{E}\| \|\mathds{A}^j\|=\|\mathds{A}^j\|$.


The following example refers to a VAR(1) process with   sparse matrices $A$ and  $\Sigma_\eps$  which possess a non-sparse lag zero autocovariance matrix $ \Gamma(0)$.

\begin{example}
Consider the VAR(1) process $X_t=A X_{t-1}+\eps_t$ with   $\eps_1 \sim(0,I_p)$ and where   $A_{1,1}=\delta$, $A_{j+1,j}=\delta,j=1,\dots,p$, $\delta \neq 0$, and  $A_{i,j}=0$  elsewhere. Then, $A$ is sparse and each row contains only one non-zero entry, namely  $\delta$. For  $\delta \in (-1,1)$, the VAR(1) process is causal. However, all elements of the   lag-zero autocovariance matrix $ \Gamma(0)$ are different from zero which also holds for the inverse. Notice that these  non-zero entries can be small  for large $p$ and for  the parameter $\delta$ not close  to the boundaries  $1$ respectively  $-1$. The magnitude of the non-zero entries can be increased for $\Gamma(0)$ or $\Gamma(0)^{-1}$, if additionally a non-diagonal $\Sigmaeps^{-1}$ or $\Sigmaeps$ is chosen, respectively. 
\end{example}


\clearpage
\begin{center}
{\large\bf Appendix B: Bias-corrected bootstrap percentage points}
\end{center}


Suppose we have data $ X_1, X_2, \ldots, X_n$ and we want to test $ H_0: \theta =\theta_0$ against $H_1:  \theta >\theta_0$, where $\theta$ is some parameter of interest. Suppose that to perform the test, a test statistic $T_n$ is used which rejects  $H_0$  if $ T_n > m_{n,\alpha}$, where $ m_{n,\alpha}$ is the upper $\alpha$-percentage point of the distribution of $ T_n$ under $H_0$.   A bootstrap-based test works by generating pseudo data $ X^*_1, X^*_2, \ldots, X^*_n$ under $ H_0$ and calculating the upper $\alpha$-percentage point $m_{n,\alpha}^*$ of the distribution of $ T_n^*$, where $T_n^*$ is the same statistic as $ T_n$ but based on $ X_1^*, X_2^*, \ldots, X_n^*$. That is, 
\begin{equation} \label{eq.bootperc}
     m_{n,\alpha}^*= F^{-1}_{T^*_n}\Big(\Phi(z_{\alpha})\Big),
     \end{equation}
where $\Phi$ denotes the distribution of function of the standard Gaussian random variable, $ \Phi(z_\alpha)=1-\alpha$, $F_{T^*_n} $ the distribution function of $ T_n^*$ and $F^{-1}_{T*_n}$ the quantile function of the same random variable.

Suppose now that  $ X_1^+, X_2^+, \ldots, X_n^+$ is a sample of pseudo 
random variables generated under $H_0$ and  using  a second bootstrap experiment which is based on $ X_1^*, X_2^*, \ldots, X_n^*$.  Let $ T_n^+$ be the test statistic of interest calculated using $ X_1^+, X_2^+, \ldots, X_n^+$. Assume now that the following assertions are true. 
\begin{enumerate}
    \item \ There exists a monotone increasing function $g$ such that 
    \begin{itemize}
\item[(i)] 
    $ g(T_n)-g(\theta_0) \sim N(z_0\sigma_0, \sigma^2_0)$.
    \item[(ii)]  \ $ g(T_n^*) -g(\theta_0) \sim N(-z_1\sigma_0, \sigma_0^2)$,
    (Recall that $T_n^*$ stems from a bootstrap experiment based on the observed data $X_1, X_2, \ldots, X_n$).
   \item[(iii)] \  $g(T_n^+) -g(\theta_0) \sim N(-z_2 \sigma_0, \sigma_0^2)$,  (Recall that $T_n^+$ stems from a bootstrap experiment based on the bootstrap sample  $X_1^*, X_2^*, \ldots, X_n^*$).
   \end{itemize}
    \item  It holds true  that 
    $ -z_0 + z_1 = -z_1 + z_2$. 
\end{enumerate}
The essential motivation behind the above assumptions is that the second bootstrap experiment which leads to the bootstrap sample $X_1^+, X_2^+, \ldots, X_n^+$, should  imitate the same bias  as the first bootstrap experiment leading to $X_1^*, X_2^*, \ldots, X_n^*$. 
If the above relations hold true, then it can be shown that a bias  corrected bootstrap percentage point which can be used for performing the bootstrap based test is given by
\[ m^{*(B)}_{n,\alpha}=F^{-1}_{T^*_n} \Big( \Phi\big(\sqrt{2}z_0 +  z_{\alpha}\big)\Big)=F^{-1}_{T^*_n} \Big( \Phi\big(\sqrt{2}\cdot \Phi^{-1}\big(P(T^+_n\leq T_n^*)\big) +  z_{\alpha}\big)\Big). \]
Notice that if no bias exist, that is if $z_0=0$ or $P(T^+_n\leq T^*_n) = 1/2$, respectively, then the standard bootstrap percentage point (\ref{eq.bootperc}) is obtained.  Furthermore,  we can estimate all quantities appearing in the expression for $ m^{*(B)}_{n,\alpha}$ without knowledge of the transformation  $g$. An  estimator of $z_0$  can be obtained  using  the following algorithm.
\begin{enumerate}
    \item[] {\it Step 1:} Generate a pseudo time series $X_1^*,\dots,X_n^*$ under the null using the bootstrap procedure  described  in Section~\ref{se.bootalgo}. 
    \item[] {\it Step 2:} Based on $X_1^*,\dots,X_n^*$, compute the test statistic $T_n^*$ and compute (under the null) the estimators $\hat A^{*(s)},s=1,\dots,d$, and $\hat \Sigma_\eps^{*}$.
    \item[] {\it Step 3:} Given the estimators $\hat A^{*(s)},s=1,\dots,d$, and $\hat \Sigma_\eps^{*}$, generate pseudo time  series $X_1^{+},\dots,X_n^{+}$ and compute the test statistic $T_n^{+}$.
    \item[] {\it Step 4:} Repeat {\it Step 3}  a number of times, say $B_2$ times, and estimate $z_0$ by $\hat z^+_0=\Phi^{-1}(\sum_{k=1}^{B_2} \ind{}\{T_{n,k}^{+}<T_n^*\}/B_2)$, where $ T^+_{n,k}$ denote  the value of the test statistic calculated using  the $k$th  pseudo time series, $k=1,2, \ldots, B_2$.
\end{enumerate}
To reduce the dependence of the above procedure on the initial time series $ X_1^*, X_2^*, \ldots, X_n^*$ appearing in Step 1, we apply the above algorithm to $200$  generated  bootstrap time series $ X_1^*, X_2^*, \ldots, X_n^*$ and average the estimates $ \hat{z}_0^+$ obtained.  
As long as the number $B_2$ of trials in Step 4 is large enough, the outcome is not affected considerably. 
In the numerical examples of this section we  use,  for computational reasons, $200$ repetitions and set  $B_2=60$. To reduce the computational burden one can apply the idea of \cite{giacomini2013warp}, that is a  good estimate of $ z_0$ can  also  be obtain by using  $B_2=1$ and  a larger number of repetitions.

\newpage 

\FloatBarrier
\clearpage
\begin{center}
{\large\bf Appendix C: Data description}
\begin{table}[ht]
\begin{tabular}{rllll}
  \hline
 & 1 & 2 & 3 & 4 \\ 
  \hline
1 & RPI & PAYEMS & M1SL & EXUSUKx \\ 
  2 & W875RX1 & USGOOD & M2SL & EXCAUSx \\ 
  3 & DPCERA3M086SBEA & CES1021000001 & M2REAL & WPSFD49207 \\ 
  4 & CMRMTSPLx & USCONS & TOTRESNS & WPSFD49502 \\ 
  5 & RETAILx & MANEMP & BUSLOANS & WPSID61 \\ 
  6 & INDPRO & DMANEMP & NONREVSL & WPSID62 \\ 
  7 & IPFPNSS & NDMANEMP & CONSPI & OILPRICEx \\ 
  8 & IPFINAL & SRVPRD & S.P.500 & PPICMM \\ 
  9 & IPCONGD & USTPU & S.P..indust & CPIAUCSL \\ 
  10 & IPDCONGD & USWTRADE & S.P.div.yield & CPIAPPSL \\ 
  11 & IPNCONGD & USTRADE & S.P.PE.ratio & CPITRNSL \\ 
  12 & IPBUSEQ & USFIRE & FEDFUNDS & CPIMEDSL \\ 
  13 & IPMAT & USGOVT & CP3Mx & CUSR0000SAC \\ 
  14 & IPDMAT & CES0600000007 & TB3MS & CUSR0000SAD \\ 
  15 & IPNMAT & AWOTMAN & TB6MS & CUSR0000SAS \\ 
   \hline
\end{tabular}
\caption{The first $60$ of the $p=124$  time series of FRED  used in the real-data example; a description as well as a grouping of the times series can be found in \cite{mccracken2016fred} and the references therein.}
\end{table}

\begin{table}[ht]
\begin{tabular}{rllll}
  \hline
 & 1 & 2 & 3 & 4 \\ 
  \hline
  16 & IPMANSICS & AWHMAN & GS1 & CPIULFSL \\ 
  17 & IPB51222S & HOUST & GS5 & CUSR0000SA0L2 \\ 
  18 & IPFUELS & HOUSTNE & GS10 & CUSR0000SA0L5 \\ 
  19 & CUMFNS & HOUSTMW & AAA & PCEPI \\ 
  20 & HWI & HOUSTS & BAA & DDURRG3M086SBEA \\ 
  21 & HWIURATIO & HOUSTW & COMPAPFFx & DNDGRG3M086SBEA \\ 
  22 & CLF16OV & PERMIT & TB3SMFFM & DSERRG3M086SBEA \\ 
  23 & CE16OV & PERMITNE & TB6SMFFM & CES0600000008 \\ 
  24 & UNRATE & PERMITMW & T1YFFM & CES2000000008 \\ 
  25 & UEMPMEAN & PERMITS & T5YFFM & CES3000000008 \\ 
  26 & UEMPLT5 & PERMITW & T10YFFM & UMCSENTx \\ 
  27 & UEMP5TO14 & AMDMNOx & AAAFFM & MZMSL \\ 
  28 & UEMP15OV & ANDENOx & BAAFFM & DTCOLNVHFNM \\ 
  29 & UEMP15T26 & AMDMUOx & TWEXMMTH & DTCTHFNM \\ 
  30 & UEMP27OV & BUSINVx & EXSZUSx & INVEST \\ 
  31 & CLAIMSx & ISRATIOx & EXJPUSx & VXOCLSx \\ 
   \hline
\end{tabular}
\caption{The last $64$ of the $p=124$  time series of FRED  used in the real-data example; a description as well as a grouping of the times series can be found in \cite{mccracken2016fred} and the references therein.}
\end{table}
\end{center}
\FloatBarrier

\clearpage 
\begin{center}
{\large\bf Appendix D: Simulation set-up}
\end{center}
Let $A_{BLOCK}^\xi=
\begin{pmatrix}
D & 0 \\
B & C
\end{pmatrix}$
and $
\Sigma_{BLOCK}=\begin{pmatrix}\Sigma_{11} & 0 \\ 0 & \Sigma_{22} \end{pmatrix},$
where $0$ denotes  a null matrix of appropriate dimensions and the submatrices $\Sigma_{11}$, $\Sigma_{22}$ $D$, $B$ and  $C$  are defined as follows. 
$$D=\operatorname{diag}\big< \xi,-.7,\xi,-.6,.6,0,0,0,0,0.2,0.5,-0.8,0,0\big>,$$ 
$$
B= 
\begin{pmatrix}
 0.8 & 0.2 & -0.4 & 0.0 & 0.0 & 0.0 & 0.0 & 0.0 & 0.0 & 0.0 & 0.0 & 0.0 & 0.0 & 0.0 \\ 
 0.0 & 0.6 & -0.7 & 0.0 & 0.0 & 0.8 & 0.0 & 0.0 & 0.0 & 0.0 & 0.0 & 0.0 & 0.0 & 0.0 \\ 
 0.0 & 0.0 & 0.0 & -0.9 & 0.0 & 0.0 & 0.0 & 0.0 & 0.0 & -0.6 & 0.0 & 0.0 & 0.0 & 0.0 \\ 
 0.0 & 0.0 & 0.0 & 0.8 & 0.0 & 0.0 & 0.2 & 0.0 & 0.0 & 0.0 & 0.0 & 0.0 & 0.0 & 0.0 \\ 
 0.0 & 0.7 & 0.0 & 0.0 & 0.0 & -0.3 & 0.0 & 0.0 & 0.0 & 0.0 & 0.0 & 0.0 & 0.0 & -0.7 \\ 
 0.3 & 0.0 & 0.0 & 0.0 & 0.0 & 0.0 & 0.0 & 0.0 & 0.9 & 0.0 & 0.0 & 0.0 & 0.0 & 0.0,
\end{pmatrix}
$$
$$
C=\begin{pmatrix}
 \xi & 0.0 & 0.0 & 0.0 & 0.0 & 0.0 \\ 
0.0 & 0.0 & 0.0 & 0.3 & 0.0 & 0.0 \\ 
0.0 & 0.0 & 0.0 & 0.0 & -0.3 & 0.0 \\ 
0.6 & 0.0 & 0.0 & 0.0 & 0.0 & 0.0 \\ 
0.0 & 0.0 & 0.6 & 0.0 & 0.0 & 0.0 \\ 
0.0 & 0.0 & 0.0 & 0.0 & 0.0 & \xi \\ 
  \end{pmatrix}
  \quad \mbox{and}  \quad 
\Sigma_{22}=\begin{pmatrix}
1.00 & 0.25 & 0.25& 0.25 & 0.25& 0.25 \\ 
0.25 & 1.00 & 0.00 & 0.00 & 0.00 & 0.00 \\ 
0.25 & 0.00 & 1.00 & 0.00 & 0.00 & 0.00 \\ 
0.25 & 0.00 & 0.00 & 1.00 & 0.00 & 0.00 \\ 
 0.25 & 0.00 & 0.00 & 0.00 & 1.00 & 0.00 \\ 
 0.25 & 0.00 & 0.00 & 0.00 & 0.00 & 1.00 \\ 
  \end{pmatrix}
  $$
  $$
\Sigma_{11}=
\begin{pmatrix}
 1.0 & 0.5 & 0.0 & 0.0 & 0.0 & 0.0 & 0.0 & 0.0 & 0.0 & 0.0 & 0.0 & 0.0 & 0.0 & 0.0 \\ 
 0.5 & 1.0 & 0.5 & 0.0 & 0.0 & 0.0 & 0.0 & 0.0 & 0.0 & 0.0 & 0.0 & 0.0 & 0.0 & 0.0 \\ 
 0.0 & 0.5 & 1.0 & 0.5 & 0.0 & 0.0 & 0.0 & 0.0 & 0.0 & 0.0 & 0.0 & 0.0 & 0.0 & 0.0 \\ 
 0.0 & 0.0 & 0.5 & 1.0 & 0.5 & 0.0 & 0.0 & 0.0 & 0.0 & 0.0 & 0.0 & 0.0 & 0.0 & 0.0 \\ 
 0.0 & 0.0 & 0.0 & 0.5 & 1.0 & 0.0 & 0.0 & 0.0 & 0.0 & 0.0 & 0.0 & 0.0 & 0.0 & 0.0 \\ 
 0.0 & 0.0 & 0.0 & 0.0 & 0.0 & 1.0 & 0.0 & 0.0 & 0.0 & 0.0 & 0.0 & 0.0 & 0.0 & 0.0 \\ 
 0.0 & 0.0 & 0.0 & 0.0 & 0.0 & 0.0 & 1.0 & 0.0 & 0.0 & 0.0 & 0.0 & 0.0 & 0.0 & 0.0 \\ 
 0.0 & 0.0 & 0.0 & 0.0 & 0.0 & 0.0 & 0.0 & 1.0 & 0.0 & 0.0 & 0.0 & 0.0 & 0.0 & 0.0 \\ 
 0.0 & 0.0 & 0.0 & 0.0 & 0.0 & 0.0 & 0.0 & 0.0 & 1.0 & 0.0 & 0.0 & 0.0 & 0.0 & 0.0 \\ 
 0.0 & 0.0 & 0.0 & 0.0 & 0.0 & 0.0 & 0.0 & 0.0 & 0.0 & 1.0 & -0.5 & 0.0 & 0.0 & 0.0 \\ 
 0.0 & 0.0 & 0.0 & 0.0 & 0.0 & 0.0 & 0.0 & 0.0 & 0.0 & -0.5 & 1.0 & -0.5 & 0.0 & 0.0 \\ 
 0.0 & 0.0 & 0.0 & 0.0 & 0.0 & 0.0 & 0.0 & 0.0 & 0.0 & 0.0 & -0.5 & 1.0 & 0.0 & 0.0 \\ 
 0.0 & 0.0 & 0.0 & 0.0 & 0.0 & 0.0 & 0.0 & 0.0 & 0.0 & 0.0 & 0.0 & 0.0 & 1.0 & 0.0 \\ 
 0.0 & 0.0 & 0.0 & 0.0 & 0.0 & 0.0 & 0.0 & 0.0 & 0.0 & 0.0 & 0.0 & 0.0 & 0.0 & 1.0 \\
  \end{pmatrix}
  $$

\clearpage 
\begin{center}
{\large\bf Appendix E: Additional proofs}
\end{center}
\begin{proof}[Proof of Lemma~\ref{lem.physical}]
Note that $W_t=\A W_{t-j} +\mathds{E} \eps_{t}=\sum_{j=0}^\infty \A \mathds{E} \eps_{t-j}$ and by Assumption ~\eqref{ass1.2} $\|A^k\|_2=O(\lambda^k),\lambda<1$. Such a VAR(1) process possesses the functional dependence measure $\delta_{k,q,i}\leq C_q \|{\A}^k\|_2=O(\lambda^k)$ for every $i$ as given by Example 2.2 in \cite{chen2013covariance}. Thus, for $\{W_t\}$ and some $\alpha>1/2-1/q$ we obtain $\Psi_{q,\alpha}<\infty$ and $\Upsilon_{q,\alpha}=O(p^{1/q})$. We obtain $\| \|Y_\cdot\|_\infty \|_{q,\alpha}=O(\kp)$ due to  $\|\A\|_\infty=O(\kp)$. 
Since $\|(\Gammas)^{-1}\|_2<\infty$ and $(e_r^\top ((\Gammas)^{-1} W_t W_t^\top -I_{dp})e_r)_{r=1,\dots,dp}$ is a nonlinear transformation which  preserves the functional dependence, see \cite{wu2005nonlinear}, these results can be transferred and we obtain 1.
Similarly, we obtain 2. Note here that $\|(\Gammas)^{-1} \A\|_\infty=O(\kpp \kp)$.
\end{proof}

\begin{proof}[Proof of Lemma~\ref{lem.finite.moment}]
We have $v^\top X_1=\sum_{j=0}^\infty v B_j \eps_{1-j}=\sum_{j=0}^\infty \xi_j=\sum_{j=0}^n \xi_j+Z_n$ and $\{\xi\}$ is an i.i.d. sequence. Thus, $E |v^\top X_1|^q\leq C_q E(\sum_{j=0}^n \xi_j)^q+ E Z_n^q$. Since $B_j$ is summable and $(E (v^\top \eps_0)^q)^{1/q}<\infty$, $E Z_n^q$ can be made arbitrary small. Furthermore, by Rosenthal inequality as in Example 2.2 in \cite{zhang2018gaussian} we obtain $E(\sum_{j=0}^n \xi_j)^q\leq C_q \linebreak  (\sum_{j=0}^n E(\xi_j)^2)=C_q (\sum_{j=0}^n v^\top B_j \Sigma B_j^\top v)^q<\infty$.
\end{proof}

\begin{proof}[Proof of Lemma~\ref{lem.concen}]
Let $Z_{t;r}=e_r^\top [(\Gammas)^{-1} W_t W_t^\top -I_{dp}]v$. We have $E Z_{t;r}=0$.
Furthermore, $\|Z_{t;r}+1\|_{E,q/2}^{q/2}=E(e_r\top (\Gammas)^{-1} \sum_{j=0}^\infty \A^j \U_{t-j} \sum_{s=0}^\infty \U_{t-s}^\top (\A^s)^\top v)^q/2\leq \linebreak E(e_r\top (\Gammas)^{-1} \sum_{j=0}^\infty \A^j \U_{t-j} )^q+E(\sum_{s=0}^\infty v \A^s \U_{t-s})^q <\infty$ by Lemma~\ref{lem.finite.moment} and  due to $\|\A^j\|_2\leq \|\Gammas\|_2<C$. Furthermore, Lemma~\eqref{lem.physical} implies $\nu_{q/2}<\infty$. This allows us to use Nagaev's inequality for physical dependent processes, see Theorem 2 in \cite{liu2013probability} and their remark below Theorem 2. Hence, we obtain for some $M>0$
\begin{align*}
    P&(\max_{r=1,\dots,dp} | \frac{1}{n} \sum_{t=d}^{n-1} Z_{t;r} | > \sqrt{\log (p) /n}  M) \leq p \max_{r} P(| \frac{1}{n} \sum_{t=d}^{n-1} Z_{t;r} | > \sqrt{\log (p) /n}  M) \\
    &\leq p \max_r \Big[c_q \frac{n(\nu_{q/2}^{q/2+1}+\|Z_{t;r}\|_{E,q/2}^{q/2}}{ (n \log(p))^{q/4} (M )^{q/2}}+c_q^\prime \exp(-c_q \frac{\log(p) M^2 }{\nu_{q/2}^{2+4/q}}+2 \exp(-c_q \frac{\log(p) M^2 }{\|Z_{t;r}\|_{E,2}^2})\Big]\\
    & \leq C \frac{np C}{n^{q/4} \log^{q/4}(p) M^{q/2}} + C \exp(-\log(p) [C M^2 -1]).
\end{align*}
Since $(np)(n^{q/4} \log^{q/4}(p))=O(1)$, there exists for every $\eps>0$ an $M>0$ such that the probability is smaller $\eps$.
\end{proof}

\begin{proof}[Proof of Lemma~\ref{lem.dagger}]
Note that $I_p=e_r e_r^\top +I_{p;-r}I_{p;-r}^\top$. Thus, we have
\begin{align*}
(e_r-I_{p;-r} & \beta_{r;-r})(e_r^\top C^{-1} e_r)
=e_r e_r^\top C^{-1} e_r -I_{p;-r} (I_{p;-r}^\top C I_{p;-r})^{-1}(I_{p;-r}^\top C e_r) e_r^\top C^{-1} e_r\\
&=C^{-1} e_r-I_{p;-r}^\top (I_{p;-r}^\top C I_{p;-r})^{-1}[I_{p;-r}^\top C I_{p;-r}I_{p;-r}^\top C^{-1} e_r+I_{p;-r}^\top C e_r e_r^\top C^{-1} e_r]\\
&=C^{-1} e_r-I_{p;-r}^\top (I_{p;-r}^\top C I_{p;-r})^{-1} \underline{0}.
\end{align*}
\end{proof}

\end{document}